\providecommand{\U}[1]{\protect\rule{.1in}{.1in}}
\newtheorem{theorem}{Theorem}
\newtheorem{lemma}[theorem]{Lemma}
\newenvironment{proof}[1][Proof]{\noindent\textbf{#1.} }{\ \rule{0.5em}{0.5em}}
\begin{document}
\preprint{ }
\preprint{UATP/2401}
\title{A No-Go Theorem of Analytical Mechanics for the Second Law Violation \ \ \ }
\author{P.D. Gujrati,$^{1,2}$ }
\affiliation{$^{1}$Department of Physics and $^{2}$School of Polymer Science and Polymer
Engineering, The University of Akron, Akron, OH 44325}
\email{pdg@uakron.edu}

\begin{abstract}
We follow the Boltzmann-Clausius-Maxwell (BCM) proposal (see text) to solve a
long-standing problem of identifying the underlying cause of the second law
(SL) of spontaneous irreversibility, a stochastic universal principle, as the
\emph{mechanical equilibrium} (stable or unstable) \emph{principle} (Mec-EQ-P)
of analytical mechanics as applied to deterministic microstates $\left\{
\mathfrak{m}_{k}\right\}  $ of an isolated nonequilibrium system $\Sigma$ of
\emph{any} size. The principle leads to \emph{nonnegative} system intrinsic
(SI) microwork $\left\{  dW_{k}\right\}  $ and SI-average $dW$ during any
spontaneous process. In conjuction with the first law, Mec-EQ-P leads to a
\emph{generalized second law }(GSL) $dQ=dW\geq0$ for $\Sigma$, where $dQ\doteq
TdS$ is the purely stochastic SI-macroheat that corresponds to $dS\geq0$ for
$T>0$ and $dS<0$ for $T<0$, where $T$\ is the temperature $\left.
dQ/dS\right\vert _{E}$.\ The GSL supercedes the conventional SL formulation
$dS\geq0$ that is valid only for a macroscopic system $\Sigma_{\text{M}}$ for
positive temperatures, but reformulates it to $dS<0$ for negative
temperatures. It is quite surprising that GSL is not only a direct consequence
of intertwined mechanical and stochastic macroquantities through the first law
but also remains valid for any arbitrary irreversible process in $\Sigma$\ of
any size as an identity, allowing $dS\lesseqqgtr0$ for $T\lesseqqgtr0$. It
also becomes a no-go theorem for GSL-violation unless we abandon Mec-EQ-P of
analytical mechanics used in the BCM proposal, which will be catastrophic for
theoretical physics. In addition, Mec-EQ-P also provides new insights into the
roles of spontaneity, nonspontaneity, negative temperatures, instability, and
the significance of $dS\lesseqqgtr0$ due to nonspontaneity and inserting
internal constraints.

\end{abstract}
\date{\today}
\maketitle

Even though Einstein \cite{Einstein} was convinced that classical
thermodynamics\ (Cl-Th) as " ... the only physical theory of universal content
... will never be overthrown," and Eddington \cite{Eddington} considered the
fundamental axiom of the second law (SL) in terms of entropy $S$\ for an
isolated system $\Sigma$\ as its cornerstone holds " ... the supreme position
among the laws of Nature," the recent trend has been to search for
SL-violation
\cite{Evans1993,Evans1994,Evans2002,Gerstner2002,Ebler2018,Ebler2022,Procopio2019,DAbramo2012,Ford2006,Lee,Fu,Moddel,Capek,Pandey}
in $\Sigma$, whose size ranges from mesoscopic to cosmological scales, to cast
doubt on Cl-Th of a nonequilibrium (NEQ) macrostate $\mathfrak{M}$, following
Maxwell \cite{Maxwell}; here, $\mathfrak{M}\doteq\left\{  \mathfrak{m}%
_{k},p_{k}\right\}  $\ is formed by microstates $\mathfrak{m}_{k}$ of the
Hamiltonian $\mathcal{H}$\ with microenergies $E_{k}$ appearing with
probabilities $p_{k},k=1,2,\cdots$. These attempts span widely different
fields from information to biological thermodynamics. Do \emph{positive} and
\emph{negative} temperatures $T$ both satisfy SL as is commonly believed
\cite[for example]{Ramsey}?

There are indirect but strong arguments for a deep connection of SL with
thermodynamic stability \cite{Rovelli2022,
Capela2022,Gavassino2022,Dafermos1979,Callen,
Fosdick1980,KestinV2,Tolman,Landau,Gibbs,Lieb,Gallavotti}, in which a (stable)
NEQ macrostate $\mathfrak{M}^{\text{s}}$ with a lower bound in energy, the
continuous blue curve in Fig. \ref{Fig.1}, asymptotically \emph{converges} to
a unique and stable equilibrium (SEQ) macrostate $\mathfrak{M}_{\text{seq}}$
\cite{Gallavotti} along blue arrows. Thus, a SL-violation strongly suggests,
but not yet verified, thermodynamic \emph{instability} for its cause so we
also consider rarely studied unstable macrostate $\mathfrak{M}^{\text{u}}$,
the continuous red curve, \emph{emerging }out of its unstable equilibrium
(UEQ) macrostate $\mathfrak{M}_{\text{ueq}}$ along red arrows; however, see
\cite{Ramsey} as exception. All arrows point towards \emph{increasing} time
$t$, during which $\mathfrak{M}^{\text{s}}$ (or $\mathfrak{M}^{\text{u}}$)
becomes less (or more) nonuniform with the entropy increasing (decreasing),
with its evolution controlled by its sink $\mathfrak{M}_{\text{seq}}$ (or
source $\mathfrak{M}_{\text{ueq}}$) as succinctly explained in Supplementary
Material \cite{Suppl}. We use compact notation $\mathsf{Q}$\ for
($\mathsf{Q}^{\text{s}}$,$\mathsf{Q}^{\text{u}}$), and $\mathsf{Q}_{\text{eq}%
}$\ for ($\mathsf{Q}_{\text{seq}}$,$\mathsf{Q}_{\text{ueq}}$). Thus, we say
that $\mathfrak{M}_{\text{eq}}$ controls $\mathfrak{M}:$($\mathfrak{M}%
^{\text{s}}$,$\mathfrak{M}^{\text{u}}$)-evolution along continuous curves, and
$\mathfrak{m}_{k\text{eq}}$ controls $\mathfrak{m}_{k}:$($\mathfrak{m}%
_{k}^{\text{s}}$,$\mathfrak{m}_{k}^{\text{u}}$)-evolution along dashed-dot
trajectories. We also let $\mathsf{Q}$ denote $E_{k},E_{k\text{eq}}$, and
$\mathbf{F}_{k}$, the last two defined later.%
\begin{figure}
[ptb]
\begin{center}
\includegraphics[
height=2.2407in,
width=3.5077in
]%
{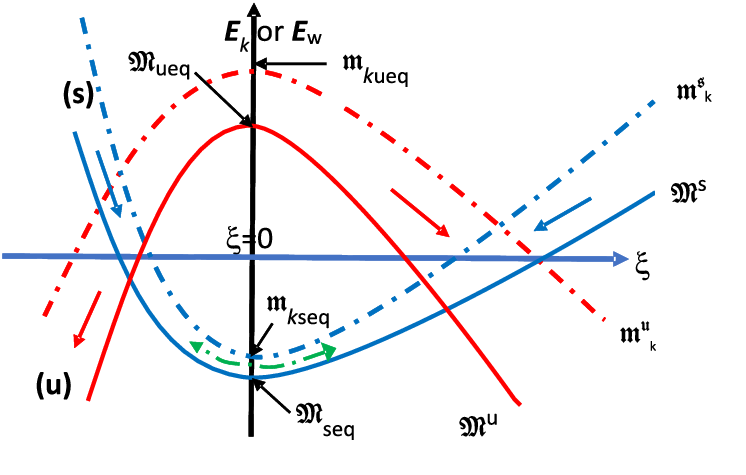}%
\caption{Schematic forms of microenergy $E_{k}$ (dashed-dot curves) and
macrowork function $E_{\text{w}}$ (solid curves) as functions of the internal
variable $\boldsymbol{\xi}$, with $\xi=0$ denoting EQ in $\mathfrak{S}%
_{\mathbf{Z}}$. Alternatively, these curves can be considered as a function of
time $t$ in $\mathfrak{S}_{\mathbf{X}}$, which increases along the directions
of the blue and red arrows. We only consider the case when each curve has a
single extremum. The discussion is easily extended to more complex forms. The
blue color curves and solid blue arrows represent the evolution controlled by
the stable (s) case. The red color curves and solid red arrows represent the
evolution controlled by the unstable (u) case. In both cases, the arrows lower
the energy. The extrema of all curves occur at $\boldsymbol{\xi}=0$ and
represent a uniform body. For the extremum to denote equilibrium, we must also
have $\overset{\cdot}{\boldsymbol{\xi}}=0$ there. The green double-arrow is
discussed in the text. }%
\label{Fig.1}%
\end{center}
\end{figure}

All processes in $\Sigma$\ are \emph{system-intrinsic} (SI) and generated
\emph{internally }to change $E_{k}$. They are commonly called
\emph{spontaneous} \cite{Internal} and are governed by SL, which is only
satisfied by a macroscopic system $\Sigma_{\text{M}}$
\cite{Callen,Landau,Tolman,Lieb}. Any SL-violation, being uncommon and
controversial, will most probably result in puzzling outcomes, not all of
which are recognized or discussed so far in the current literature. Also,
there is no serious inquiry into violation thermodynamics (Viol-Th) supporting
it because the root cause of SL and any special stochasticity
\cite{Note-stochasticity} for its validity are not understood; however, see
\cite{Ramsey}. A \emph{nonspontaneous} process is not internally\emph{
}generated and requires external agent $\Sigma_{\text{ext}}$ so cannot occur
in $\Sigma$.

We prove a generalized SL (GSL), not to be confused with the one proposed by
Beckenstein \cite{Beckenstein,Sewell} for black holes, by considering
$\mathfrak{M}$ in an isolated NEQ discrete\ system $\Sigma$
\cite{Schottky,Muschik} of \emph{any} size, not necessarily $\Sigma_{\text{M}%
}$. It unravels the mystery of and provides deeper insight into the root cause
of SL $dS\geq0$\ and its violation $dS<0$\ \cite{NoEngine}. The equality
occurs in the equilibrium (EQ) macrostate $\mathfrak{M}_{\text{eq}}$ having
maximum entropy \cite{Equilibrium}, and $S$ is always given by Gibbs'
formulation \cite{Gibbs,Shannon} $S\doteq-%
{\textstyle\sum\nolimits_{k}}
p_{k}\ln p_{k}$. There is another system $\Sigma_{\text{E}}$ of intermediate
size determined by the interaction range $\lambda_{\text{E}}$\ to ensure
$E_{k}$-additivity in $\mathfrak{M}$, which is nonuniform in the state space
$\mathfrak{S}_{\mathbf{X}}$, while $\mathfrak{M}_{\text{eq}}$ is uniform as
expressed by $\mathcal{H}$ \cite{Suppl}; here, $\mathbf{X}\doteq
(E,\mathbf{w)}$ is the \emph{fixed} observable set containing energy $E$ and
work parameter $\mathbf{w}$ that uniquely specifies $\mathfrak{M}_{\text{eq}}$
and $\mathcal{H}(\mathbf{w})$. There are two distinct ways to specify
$\mathfrak{M}$. The most common approach is to use $\mathfrak{S}_{\mathbf{X}}$
to nonuniquely specify $\mathfrak{M}$ so that its various quantities such as
$S(\mathbf{X},t)$ or $\mathcal{H}(\mathbf{w},t)$ have \emph{explicit} $t$
dependence. In $\mathfrak{S}_{\mathbf{X}}$, we can focus on a $\Sigma$ of any
size \cite{Suppl}. The alternative is to uniquely specify $\mathfrak{M}%
=\mathfrak{M}(\mathbf{w},\boldsymbol{\xi})=\mathfrak{M}(\boldsymbol{\xi})$ and
$\mathcal{H}(\mathbf{w},\boldsymbol{\xi})=\mathcal{H}(\boldsymbol{\xi})$ in an
extended state space $\mathfrak{S}_{\mathbf{Z}}$ spanned by $\mathbf{Z}%
\doteq(E,\mathbf{W),W}=(\mathbf{w},\boldsymbol{\xi})$ the extended work
parameter including the internal variable $\boldsymbol{\xi}$
\cite{Coleman,Maugin,Guj-entropy,Guj-Foundations,Gujrati-Hierarchy,Langer}
that is determined by internal structures generated by nonuniformity in
$\Sigma_{\text{E}}$ and $\Sigma_{\text{M}}$. It contains $\iota$\ components,
with $\iota$ increasing and $S$ decreasing with nonuniformity of
$\mathfrak{M}(\boldsymbol{\xi})$. There is no explicit time dependence in
$S(\mathbf{Z})$ and $\mathcal{H}(\boldsymbol{\xi})$ in $\mathfrak{S}%
_{\mathbf{Z}}$ that provides some benefit. We define $\boldsymbol{\xi}$\ so
that $\boldsymbol{\xi}$\ $=0$ in $\mathfrak{M}_{\text{eq}}$ \cite[(B)]{Suppl}
for which $\iota=0$. We also assume that $\boldsymbol{\xi}$ is the same for
all $\mathfrak{m}_{k}$'s; extension to different $\boldsymbol{\xi}_{k}$ is
easily done. As explicit time dependence in $\Sigma$ in $\mathfrak{S}%
_{\mathbf{X}}$ is equivalent to the implicit time dependence in $\Sigma
_{\text{E}}$ and $\Sigma_{\text{M}}$ in $\mathfrak{S}_{\mathbf{Z}}$, we can
adopt either approach. We mostly use $\Sigma$ and the first approach\ for its
simplicity. We use $\bar{\Sigma}$ for any system when the state space is not specified.

The four steps below form the core of our analysis: \ 

(S1) Describe $\bar{\Sigma}$ mechanically by its \emph{deterministic}
$\mathcal{H}$\ by specifying $\left\{  \mathfrak{m}_{k},E_{k}\right\}  $ and
SI-microwork $\left\{  dW_{k}\right\}  \doteq-\left\{  dE_{k}\right\}  $.

(S2) Introduce \emph{stochasticity} by appending a probability $p_{k}$\ to
$\mathfrak{m}_{k}$ as was first proposed by Clausius, Maxwell, and Boltzmann
(the BCM\ proposal)
\cite{Maxwell,Clausius,Boltzmann,Tolman,Landau,Gibbs,Gallavotti} to capture
\emph{dissipation} in $\mathfrak{M}$,\ and to identify various ensemble
averages $\left\langle \bullet\right\rangle $ such as its energy $E\doteq%
{\textstyle\sum\nolimits_{k}}
p_{k}E_{k}$, and $S$ above. In general, $S$ is a state function only for
$\Sigma_{\text{M}}$ in $\mathfrak{S}_{\mathbf{Z}}$, but not for $\Sigma$ and
$\Sigma_{\text{E}}$. Different choices for $\left\{  p_{k}\right\}  $ result
in different macrostates and averages for the same $\left\{  \mathfrak{m}%
_{k}\right\}  $. A constant $p_{k},\forall k$, describes a pure mechanical
system \cite{Note-stochasticity} with constant $E$ and $S$.

(S3) Introduce the \emph{first law }$dE=dQ-dW$ \cite[(D)]{Suppl} during any
infinitesimal change of $\bar{\Sigma}$ to determine the allowed change
$\left\{  dp_{k}\right\}  $\ and to identify the physics behind $dQ\doteq%
{\textstyle\sum\nolimits_{k}}
E_{k}dp_{k}$, a stochastic quantity \cite{Note-stochasticity}, and $dW\doteq-%
{\textstyle\sum\nolimits_{k}}
p_{k}dE_{k}=%
{\textstyle\sum\nolimits_{k}}
p_{k}dW_{k}$, a mechanical quantity. The law immediately leads to the
\emph{irreversibility principle} (Irr-P) \cite{Guj-entropy,Gujrati-Hierarchy}
expressed by the identity
\begin{equation}
\emph{\ }dQ=dW\gtreqqless0 \label{IP}%
\end{equation}
that \emph{intertwines} these seemingly unrelated macroquantities similar to
intertwining of electric and magnetic fields in the Maxwell theory. While they
have the same sign, their sign is not fixed yet. We use (S1), (S2), the first
law involving $dQ$ and $dW$, and $dS$ to formulate a \emph{generic} version of
\emph{NEQ statistical thermodynamics} to be called General Thermodynamics
(Gen-Th) for any possible $\mathfrak{M}$ by taking \emph{arbitrary} $\left\{
p_{k}\right\}  $ for $\bar{\Sigma}$; see Eq. (\ref{SecondLaw-General-Entropy}).

(S4) Invoke the mostly overlooked fundamental principle \emph{Mec-EQ-P}
\cite{Arnold,Chetaev} introduced below that controls (spontaneous)
\emph{evolution} of $\mathfrak{m}_{k},\forall k$,\ in $\Sigma$ \cite{Internal}
to formulate the BCM\ proposal for $\mathfrak{m}_{k}$ (MicroBCM) that fixes
the \emph{nonnegative }signature of $dW_{k}$ in Lemma \ref{Lemma}, and to
directly establish the \emph{generalized second law} (GSL) that controls
\emph{dissipation }\cite{Note-stochasticity}\emph{ }in $\bar{\Sigma}$ in
Theorem \ref{Theorem}:%
\begin{equation}
dQ=dW\geq0;dW_{k}\geq0,\forall k.\label{SecondLaw-General-Heat}%
\end{equation}
It requires $\left\{  dp_{k}\right\}  $ to take a special form $\left\{
dp_{k}^{\text{GSL}}\right\}  $ that not only determines $dS$\ but also ensures
$dQ\geq0$; any form different from $\left\{  dp_{k}^{\text{GSL}}\right\}  $
such as $\left\{  -dp_{k}^{\text{GSL}}\right\}  $ does not satisfy GSL and
must be rejected under MicroBCM proposal. The \emph{nonnegative} signature of
$dW$\ after including (S4) in Gen-Th determines the (spontaneous)
\emph{evolution} of $\mathfrak{M}$, and the resulting thermodynamics is
denoted by Gen-GSL-Th. We combine Eqs. (\ref{IP}) and
(\ref{SecondLaw-General-Heat})\ as
\begin{equation}
TdS=dW\left\{
\begin{tabular}
[c]{l}%
$\gtreqqless0$ in Gen-Th\\
$\geq0$ in Gen-GSL-Th
\end{tabular}
\ \ \ \ \right.  ,T\doteq\left.  \frac{dQ}{dS}\right\vert _{E}%
,\label{SecondLaw-General-Entropy}%
\end{equation}
in terms of the SI-temperature $T$ for any $\bar{\Sigma}$, which for
$\Sigma_{\text{M}}$ \cite[(E)]{Suppl} becomes the standard definition $\left(
\partial E/\partial S\right)  _{\boldsymbol{\xi}}$.

The traditional approach in Cl-Th is to supplement (S1), (S2), (S3) by the SL
axiom for $\Sigma_{\text{M}}$ \cite{Callen} instead of (S4) and then prove its
thermodynamic stability \cite{Landau}. However, by \emph{reversing} this
approach, we succeed not only to \emph{prove} GSL for any spontaneous process
in $\bar{\Sigma}$ for any $\left\{  p_{k}\right\}  $ as a \emph{direct
consequence} of $\mathfrak{m}_{k}$-evolution in (S4), but also proclaim a
\emph{No-Go Theorem} of theoretical physics for its violation; see Theorem
\ref{Theorem}. Our proof should be contrasted with the current situation of
proving $dS\geq0$ without ever mentioning $T$; one usually recourses to ad-hoc
assumptions like molecular chaos or master equations
\cite{Note-SecondLaw,vanKampen}. The real significance of our reverse approach
becomes very transparent and unravels many mysteries of SL by recognizing
\emph{fluctuating} $dW_{k}$ as the \emph{primitive }mechanical\emph{ }concept
that completely captures the stochasticity in $dQ$ and the importance of $T$
for SL. Their intertwining provides the basis for the famous
\emph{fluctuation-dissipation theorem} \cite{Callen-Welton,Kubo,FDNote}; see
also \cite{Note-StochasticThermodynamics,Sekimoto,Siefert}. To our knowledge,
our reverse approach with MicroBCM proposal has never been used before to
directly prove SL ($dS\geq0$)\ for $T\geq0$ and its extension requiring a
\emph{reformulation} of $dS<0$ for $T<0$; the latter contradicts the
conventional wisdom \cite{Ramsey}. In addition, GSL also disproves that
instability causes SL violation.

\textsc{Setup: }We describe the salient aspects \cite{Suppl} to prove Theorem
\ref{Theorem} for $\bar{\Sigma}$ in Gen-GSL-Th before justifying them. We
assume \cite{Ramsey1} for simplicity that $E_{k}\ $has only a single extremum
at $E_{k\text{eq}}$ as shown by the dashed-dotted curves in Fig. \ref{Fig.1}
for $\mathfrak{m}_{k}^{\text{s}}$ and $\mathfrak{m}_{k}^{\text{u}}$, each
spontaneously evolving to lower microenergies as $t$ increases along the blue
and red arrows, respectively. From the Uniformity Theorem of $m_{k\text{eq}}$
in \cite[(C)]{Suppl}, the extremum represents a \emph{uniform} microstate
$\mathfrak{m}_{k\text{eq}}$ so the rest of the dashed-dotted curves denote
nonuniform microstates $\mathfrak{m}_{k}$\ ($\mathfrak{m}_{k}^{\text{s}}$ or
$\mathfrak{m}_{k}^{\text{u}}$). Then, $\mathfrak{m}_{k\text{seq}}\ $is the
(asymptotically) \emph{stable} equilibrium of $\mathfrak{m}_{k}^{\text{s}}$,
where $E_{k\text{seq}}$ is \emph{minimum}. Any perturbation away from
$\mathfrak{m}_{k\text{seq}}$ always \emph{restores} $\mathfrak{m}%
_{k}^{\text{s}}$\ back to $\mathfrak{m}_{k\text{seq}}$ so the latter acts as a
sink for which $E_{k}^{\text{s}}\rightarrow E_{k\text{seq}}$ spontaneously as
shown by the blue arrow; we say that $\mathfrak{m}_{k\text{seq}}$ controls
$\mathfrak{m}_{k}^{\text{s}}$-evolution \cite{Kubo} during which
$\mathfrak{m}_{k}^{\text{s}}$ becomes more uniform with $dS>0$. At a
mechanically \emph{unstable} equilibrium $\mathfrak{m}_{k\text{ueq}}$ of
$\mathfrak{m}_{k}^{\text{u}}$, $E_{k\text{ueq}}$\ is \emph{maximum}; any
perturbation from it to $\mathfrak{m}_{k}^{\text{u}}$ spontaneously
\emph{repels} it further away from $\mathfrak{m}_{k\text{ueq}}$. As
$\mathfrak{m}_{k}^{\text{u}}$ \emph{never} returns to $\mathfrak{m}%
_{k\text{ueq}}$ with $E_{k}^{\text{u}}$ running away from $E_{k\text{ueq}}$,
$\mathfrak{m}_{k\text{ueq}}$ becomes a source controlling $\mathfrak{m}%
_{k}^{\text{u}}$-evolution along the red arrow during which $\mathfrak{m}%
_{k}^{\text{u}}$ becomes more nonuniform with $dS<0$. Both are deterministic
spontaneous $\mathfrak{m}_{k}$-evolutions, whose directions are controlled by
the \emph{principle of mechanical equilibrium} (Mec-EQ-P) \cite[(C)]{Suppl}.

In Gen-GSL-Th for $\bar{\Sigma}$, the directional $\mathfrak{m}_{k}$-evolution
performs $dW_{k}=-dE_{k}\geq0$ (Lemma \ref{Lemma}), whose average $dW\geq0$ is
due to spontaneous irreversible processes \cite{Prigogine} and from Irr-P is
dissipated or wasted \cite{Note-stochasticity} as macroheat $dQ$ to ensure
unchanging $E$. Without (S4), Gen-Th also describes GSL-violation ($dQ=dW<0$)
and SL-violation ($dS<0$), resulting in \emph{violation thermodynamics},
Viol-GSL-Th of GSL and Viol-Th of SL, for $\bar{\Sigma}$; see \cite[(C)]%
{Suppl} for various possible thermodynamics.

To make $S$ a \emph{state function} for $\Sigma_{\text{M}}$\ in $\mathfrak{S}%
_{\mathbf{Z}}$ requires an important \emph{restriction} on $\left\{
p_{k}\right\}  $, but not on $\left\{  dp_{k}\right\}  $. The resulting form
of Gen-Th is called \emph{restriction thermodynamics} (Rest-Th)\ for
$\Sigma_{\text{M}}$ \cite[(E)]{Suppl}, which still allows $dS$ of either sign
so $dQ=TdS\gtreqqless0$ in Rest-Th, unless supplemented by (S4); see Eq.
(\ref{SecondLaw-General-Entropy}).

\textsc{Justification: }We justify the above claims for $\Sigma$ in
$\mathfrak{S}_{\mathbf{X}}$\ within Gen-GSL-Th without any restriction on
possible $\left\{  p_{k}\right\}  $. The claims also apply to $\Sigma
_{\text{E}}$ and $\Sigma_{\text{M}}$\ in $\mathfrak{S}_{\mathbf{Z}}$ without
any change. A simple calculation at the end clarifies the claims for a
one-particle system in $\mathfrak{S}_{\mathbf{X}}$.

The extremum and the form of $\left\{  E_{k}\right\}  $ determine the extremum
and the form of $E_{\text{w}}=(E_{\text{w}}^{\text{s}},E_{\text{w}}^{\text{u}%
})$ after averaging \cite[(D)]{Suppl} for $\mathfrak{M}$ as shown by
continuous curves in Fig. \ref{Fig.1}. The averaging also generalizes
microstate Mec-EQ-P to the \emph{thermodynamical principle of (stability and
instability) equilibrium }(Th-EQ-P) for $\mathfrak{M}$ in Gen-GSL-Th,
according to which the extremum $E_{\text{eq}}$ of $E_{\text{w}}$ controls
$\mathfrak{M}$-evolution. We must not confuse this evolution with that
produced by \emph{intervention} required to prepare nonuniform $\mathfrak{M}$
out of uniform $\mathfrak{M}_{\text{eq}}$\ by internal constraints as
discusses below and in \cite{Suppl}; see also\ \cite{Callen,Gujrati-Szilard}.
As $\mathfrak{m}_{k\text{ueq}}\ $is, in effect, physically equivalent to a
"nonexistent" microstate because of its instability, $\mathfrak{M}%
_{\text{ueq}}$ is also physically nonexistent \cite{KestinV2,Wood,Note0}.
Because of this, Cl-Th only deals with $\mathfrak{M}^{\text{s}}$ having the
sink $\mathfrak{M}_{\text{seq}}$ to which it asymptotically converges.
Accordingly any $\mathfrak{m}_{k}^{\text{s}}\in\mathfrak{M}^{\text{s}}$
converges to $\mathfrak{m}_{k\text{seq}}\in\mathfrak{M}_{\text{seq}}$. Despite
this, we also consider $\mathfrak{M}^{\text{u}}$ to obtain additional and
surprising information at the microstate level that is not available in Cl-Th,
and allows for a reformulation of SL for negative $T$.

We first prove the following lemma for an equilibrium point of a mechanical
$\mathfrak{m}_{k}$\ and a thermodynamic $\mathfrak{M}$.

\begin{lemma}
\label{Lemma} During $\mathfrak{m}_{k\text{eq}}$-controlled spontaneous
evolution of $\mathfrak{m}_{k}$ of $\bar{\Sigma}$ in Gen-GSL-Th,
$\mathfrak{m}_{k}$ performs \emph{nonnegative} microwork $\Delta W_{k}$ as
$\mathfrak{m}_{k}\rightarrow\mathfrak{m}_{k}^{^{\prime}}$. Performing ensemble
average with arbitrary $\left\{  p_{k}\right\}  $ then determines
thermodynamic stability (instability) of the resulting macrostate
$\mathfrak{M}=(\mathfrak{M}^{\text{s}},\mathfrak{M}^{\text{u}})$ in
$\bar{\Sigma}$, which spontaneously performs \emph{nonnegative} macrowork
$\Delta W$ in Gen-GSL-Th as $\mathfrak{M}\rightarrow\mathfrak{M}^{^{\prime}}$.
\end{lemma}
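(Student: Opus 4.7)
The plan is to split the proof along the two claims of the lemma: a purely mechanical, deterministic argument at the microstate level to establish $\Delta W_k\geq 0$, followed by a short ensemble-averaging argument at the macrostate level to obtain $\Delta W\geq 0$ and the stability classification of $\mathfrak{M}$.

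First, I would invoke Mec-EQ-P from step (S4) to treat each microstate $\mathfrak{m}_k$ deterministically along the trajectory drawn in Fig.~\ref{Fig.1}, using the simplifying hypothesis already granted in the Setup that $E_k$ has a single extremum at $E_{k\text{eq}}$. For the stable branch $\mathfrak{m}_k^{\text{s}}$, the equilibrium $\mathfrak{m}_{k\text{seq}}$ is a \emph{minimum} of $E_k^{\text{s}}$, and the blue arrow encodes monotone spontaneous descent of $E_k^{\text{s}}$ toward $E_{k\text{seq}}$, so $\Delta E_k^{\text{s}}\leq 0$. For the unstable branch $\mathfrak{m}_k^{\text{u}}$, the equilibrium $\mathfrak{m}_{k\text{ueq}}$ is a \emph{maximum} of $E_k^{\text{u}}$, and the red arrow encodes spontaneous repulsion that carries $E_k^{\text{u}}$ strictly downhill from that maximum, so again $\Delta E_k^{\text{u}}\leq 0$. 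In either branch, the definition of SI-microwork from (S1) gives
\begin{equation}
\Delta W_k=-\Delta E_k\geq 0,
\end{equation}
which closes the microstate half of the lemma.

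Next, I would attach an \emph{arbitrary} probability set $\{p_k\}$ per step (S2) and form the ensemble average as in (S3). Since $\Delta W=\sum_{k}p_k\,\Delta W_k$ with $p_k\geq 0$ and $\Delta W_k\geq 0$ for every $k$, one obtains $\Delta W\geq 0$ as a term-by-term nonnegativity, with no restriction on $\{p_k\}$ and independently of the as-yet-unspecified $\{dp_k\}$. The classification of $\mathfrak{M}$ as $\mathfrak{M}^{\text{s}}$ or $\mathfrak{M}^{\text{u}}$ then follows because the macrowork function $E_{\text{w}}=\sum_{k}p_kE_k$ inherits the common extremum type of the $\{E_k\}$ at $\boldsymbol{\xi}=0$ --- a minimum on the stable branch and a maximum on the unstable branch --- matching the solid curves in Fig.~\ref{Fig.1}. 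This is exactly the promotion of microscale Mec-EQ-P to the macroscopic Th-EQ-P anticipated in the Setup.

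The hard part will be the unstable branch. Naively, moving \emph{away} from $\mathfrak{m}_{k\text{ueq}}$ sounds as though it should \emph{raise} $E_k$, which would contradict the lemma. The resolution hinges on $\mathfrak{m}_{k\text{ueq}}$ being a \emph{maximum}: any departure from it necessarily lowers $E_k$, so ``repulsion from $\mathfrak{m}_{k\text{ueq}}$'' and ``descent of $E_k$'' are synonymous along the red arrow. A related pitfall is conflating the spontaneous, internally generated evolution that the lemma concerns with the externally driven \emph{intervention} needed to prepare $\mathfrak{M}^{\text{u}}$ out of $\mathfrak{M}_{\text{ueq}}$ in the first place; such intervention is expressly excluded by the word ``spontaneous'' in the hypothesis and lies outside Gen-GSL-Th. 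Once these two subtleties are handled, the lemma reduces to the short mechanics-plus-averaging argument above.
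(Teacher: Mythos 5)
Your proposal is correct and follows essentially the same route as the paper: Mec-EQ-P plus the single-extremum hypothesis gives monotone descent of $E_{k}$ along both the sink (blue) and source (red) branches, hence $\Delta W_{k}=-\Delta E_{k}\geq0$, and the ensemble average with arbitrary nonnegative $\left\{ p_{k}\right\} $ then yields $\Delta W=\left\langle \Delta W_{k}\right\rangle \geq0$ term by term, with the extremum structure promoting Mec-EQ-P to Th-EQ-P for $\mathfrak{M}$. One minor slip that does not affect the argument: the macrowork function is not $E_{\text{w}}=\sum_{k}p_{k}E_{k}$ (that sum is the conserved $E=E_{\text{eq}}$); in the paper it is $E_{\text{w}}\doteq E_{\text{eq}}-W_{\text{eq}}^{\prime}$ with $dE_{\text{w}}=-dW$, which is what actually inherits the extremum type shown by the solid curves in Fig.~1.
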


\begin{proof}
(a) \textrm{Microstate Evolution}: The \emph{extremum} of $E_{k}$ at
$E_{k\text{eq}}$ represents $\mathfrak{m}_{k\text{eq}}=(\mathfrak{m}%
_{k\text{seq}},\mathfrak{m}_{k\text{ueq}})$ that is uniform in $\mathfrak{S}%
_{\mathbf{X}}$; see Uniformity Theorem in \cite[(C)]{Suppl}. We now consider a
nonuniform $\mathfrak{m}_{k}=(\mathfrak{m}_{k}^{\text{s}},\mathfrak{m}%
_{k}^{\text{u}})$ away from $\mathfrak{m}_{k\text{eq}}$ as it spontaneously
evolves towards $\mathfrak{m}_{k}^{^{\prime}}=(\mathfrak{m}_{k}^{\prime
\text{s}},\mathfrak{m}_{k}^{\prime\text{u}})$, see red and blue arrows in Fig.
\ref{Fig.1}, during which $E_{k}$ spontaneously \emph{decreases }to
$E_{k}^{^{\prime}}$ so that $\Delta E_{k}=-\Delta W_{k}\doteq E_{k}^{^{\prime
}}-E_{k}\leq0$. Thus, $\mathfrak{m}_{k}\ $performs spontaneous
\emph{nonnegative} microwork during its evolution towards $\mathfrak{m}%
_{k}^{^{\prime}}$ given by
\begin{subequations}
\begin{equation}
\Delta W_{k}=-\Delta E_{k}\geq0. \label{deltaE-isolated-micro}%
\end{equation}

(b) \textrm{Macrostate Evolution}: We now average over $\mathfrak{m}%
_{k}^{\text{s}}$ and $\mathfrak{m}_{k}^{\text{u}}$ using arbitrary $p_{k}$ to
obtain macroquantities of $\mathfrak{M}^{\text{s}}$ and $\mathfrak{M}%
^{\text{u}}$, respectively, in Gen-GSL-Th. The mechanical equilibrium
microstate $\mathfrak{m}_{k\text{eq}}$\ determines the thermodynamic EQ
macrostate $\mathfrak{M}_{\text{eq}}=(\mathfrak{M}_{\text{seq}},\mathfrak{M}%
_{\text{ueq}})$.\ The macrowork function $E_{\text{w}}$ gives $E_{\text{eq}%
}\doteq<E_{\text{eq}}>$ for thermodynamic EQ and uniform macrostate
$\mathfrak{M}_{\text{eq}}$ in $\mathfrak{S}_{\mathbf{X}}$ \cite[(C)]{Suppl}.
The spontaneous process $\mathfrak{M}\rightarrow\mathfrak{M}^{^{\prime}}$
results in the nonnegative spontaneous macrowork
\begin{equation}
\Delta W\doteq<\Delta W_{k}>\equiv-<\Delta E_{k}>\text{ }\geq
0.\label{deltaE-isolated-macro}%
\end{equation}

This completes the proof.
\end{subequations}
\end{proof}

We recall that there is no sign restriction on $\Delta W_{k}$ and $\Delta
W$\ in Gen-Th of $\Sigma$. We now prove the main Theorem.

\begin{theorem}
\label{Theorem} We consider $\Sigma$ in $\mathfrak{S}_{\mathbf{X}}$. The
spontaneous $\mathfrak{M}_{\text{eq}}$-evolution of any $\mathfrak{M}$ in
Gen-GSL-Th during which it either converges to the sink $\mathfrak{M}%
_{\text{seq}}$ for $\mathfrak{M}^{\text{s}}$\ or runs away from the
source\ $\mathfrak{M}_{\text{ueq}}$ for $\mathfrak{M}^{\text{u}}$ directly
leads to GSL in Eq. (\ref{SecondLaw-General-Heat}), making it internally
consistent with analytical mechanics. A violation of GSL requires $\Delta W<0$
that cannot occur in $\Sigma$ so the theorem is a No-Go Theorem for GSL
violation. The $\Delta S<0$ due to instability in the spontaneous evolution in
$\mathfrak{M}^{\text{u}}$ is not a violation of SL because of its negative
$T$. This makes Viol-GSL-Th and Viol-Th inconsistent with analytical mechanics
and the BCM proposal.
\end{theorem}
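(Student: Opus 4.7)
My plan is to assemble the theorem out of three pieces that are already on the table: Lemma \ref{Lemma} (which supplies the sign of the mechanical increment), the irreversibility identity of Eq. (\ref{IP}) (which intertwines $dQ$ and $dW$ via the first law on an isolated $\Sigma$), and the definition $T\doteq dQ/dS|_{E}$ (which converts the mechanical conclusion into an entropy statement). The argument is then essentially a single chain of implications, with the delicate content lying in how one extracts the No-Go clause and how one reinterprets $\Delta S<0$ when $T<0$.

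First I would apply Lemma \ref{Lemma} to any spontaneous evolution $\mathfrak{M}\rightarrow\mathfrak{M}'$ in $\Sigma$, whether it is controlled by the sink $\mathfrak{M}_{\text{seq}}$ along blue arrows or by the source $\mathfrak{M}_{\text{ueq}}$ along red arrows in Fig. \ref{Fig.1}. In both cases Mec-EQ-P forces $\Delta W_{k}\geq0$ for each $k$, and ensemble averaging with the arbitrary $\{p_{k}\}$ permitted by Gen-GSL-Th gives $\Delta W=-\langle\Delta E_{k}\rangle\geq0$. Chaining this with Eq. (\ref{IP}) immediately yields $dQ=dW\geq0$ together with $dW_{k}\geq0,\forall k$, which is exactly Eq. (\ref{SecondLaw-General-Heat}); dividing by $dS$ and inserting the definition of $T$ then recovers the Gen-GSL-Th branch of Eq. (\ref{SecondLaw-General-Entropy}).

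Second, I would convert this positive statement into the No-Go form by contraposition. A purported GSL violation in $\Sigma$ would assert a spontaneous $\mathfrak{M}\rightarrow\mathfrak{M}'$ with $\Delta W<0$; because $dQ$ and $dW$ are intertwined by the first law, this would in turn require selecting the forbidden increment $\{-dp_{k}^{\text{GSL}}\}$. But Lemma \ref{Lemma} rules this out, since $\Delta W_{k}\geq0$ was derived directly from Mec-EQ-P applied to each deterministic $\mathfrak{m}_{k}$. Hence any violator is forced to abandon Mec-EQ-P, contradicting the analytical-mechanics hypothesis that underlies the MicroBCM construction. This is exactly the sense in which Viol-GSL-Th and Viol-Th become inconsistent with the BCM proposal.

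Third, for the $\mathfrak{M}^{\text{u}}$ branch I would argue that $\Delta S<0$ is not a violation but a \emph{reformulation} at negative temperature. Along the red curves of Fig. \ref{Fig.1}, runaway from $\mathfrak{M}_{\text{ueq}}$ increases nonuniformity, and by the Gibbs formula this forces $dS<0$. Since step one has already established $dQ=dW\geq0$ on the same evolution, the defining relation $T=dQ/dS|_{E}$ forces $T<0$ in this branch. Thus the sign pattern $dS\lesseqqgtr0$ for $T\lesseqqgtr0$ announced in Eq. (\ref{SecondLaw-General-Entropy}) is consistent with GSL and is the natural extension of the conventional $dS\geq0$, which tacitly assumed $T>0$. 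The main obstacle I anticipate is not computational but conceptual: one must make precise that $\{dp_{k}\}$ is not a free parameter but is pinned by the first law at constant $E$ together with the microstate Mec-EQ-P, so that the only admissible increment is $\{dp_{k}^{\text{GSL}}\}$ and the sign of $dW$ is truly forced rather than chosen. Once this is clear, the rest of the deduction is essentially mechanical, and the real content of the theorem sits in Lemma \ref{Lemma}.
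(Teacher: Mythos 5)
Your proposal is correct and follows essentially the same route as the paper's own proof: Lemma \ref{Lemma} supplies $\Delta W\geq0$ for any spontaneous $\mathfrak{M}_{\text{eq}}$-controlled evolution, the first-law intertwining $dQ=dW$ converts this into GSL and, via $T\doteq\left.dQ/dS\right\vert_{E}$, into the entropy form of Eq. (\ref{SecondLaw-General-Entropy}); the No-Go clause then follows by contraposition, since $dW<0$ would contradict the Lemma and hence be nonspontaneous and impossible in $\Sigma$, while $dQ\geq0$ with $dS<0$ on the unstable branch forces $T<0$, so $\Delta S<0$ there is a reformulation rather than a violation of SL. Nothing essential is missing.
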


\begin{proof}
In Gen-GSL-Th, the isolated $\Sigma$ in $\mathfrak{M\neq M}_{\text{eq}}$
either spontaneously relaxes as $t$ increases to $\mathfrak{M}_{\text{seq}}$
during which $dS\geq0$ or runs off from $\mathfrak{M}_{\text{ueq}}$ during
which $dS\leq0$. As $\mathfrak{M\rightarrow M}^{\prime}$, we have $\Delta
W\geq0$ from Lemma \ref{Lemma}. It then follows from Eq.
(\ref{SecondLaw-General-Heat}) that the corresponding spontaneous macroheat
$\Delta Q\doteq%
{\textstyle\int\nolimits_{\mathfrak{M}}^{\mathfrak{M}^{\prime}}}
TdS$ is nonnegative, which proves GSL in Eq. (\ref{SecondLaw-General-Entropy})
for any $\mathfrak{M}$ in $\Sigma$, and establishes its consistency with
Mech-EQ-P of analytical mechanics. A GSL-violation $dQ<0$\ can only happen if
$dW<0$ such as along the double green arrow in Fig. \ref{Fig.1} near
$\mathfrak{M}_{\text{seq}}$. As this evolution violates Lemma \ref{Lemma}, it
is \emph{nonspontaneous} and cannot occur in $\Sigma$. This makes any
GSL\ violation impossible and turns the theorem into a No-Go theorem for
GSL-violation. The requirement $dQ\geq0$ is consistent with $T\geq0$ and
$dS\geq0$ for $\mathfrak{M}^{\text{s}}$, and $T<0$ and $dS<0$ for
$\mathfrak{M}^{\text{u}}$. The instability in $\mathfrak{M}^{\text{u}}$ and
$\Delta S<0$\ during its spontaneous evolution does not imply violating SL,
contrary to the comment at the start \cite[for example]{Ramsey}. It is also
clear that Viol-GSL-Th and Viol-Th can only hold for nonspontaneous processes.
As they cannot occur in $\Sigma$, they are inconsistent with analytical
mechanics and the BCM proposal. This completes the proof.
\end{proof}

It is instructive to consider, as an example, a two-level particle as our
system $\Sigma$. The entropy is $S=-p_{1}\ln p_{1}-p_{2}\ln p_{2}$, where
$p_{1}$and $p_{2}$ are the probabilities of the two levels. Then,
$dS=\ln[(1-p_{1})/p_{1}]dp_{1}$. The maximum of $S$ occurs at $p_{1}%
=p_{2}=1/2$, which represents the EQ point. For any other value of $p_{1}$,
the behavior is different for $dS>0$ and $dS<0$. In the former case, $dS$
brings $p_{1}$and $p_{2}$ closer to the EQ point making it a sink, while in
the latter case, $p_{1}$and $p_{2}$ move away from the EQ point making it a
source. This is consistent with our discussion above and brings out the
difference of $\mathfrak{M}^{\text{s}}$ two $\mathfrak{M}^{\text{u}}$ vividly.
To make the example more interesting, we need to associate energies to the two
level. We capture nonuniformity of the system by adding small contributions
$\epsilon_{1}$ and $\epsilon_{2}$ to the equilibrium energies $e_{1}$ and
$e_{2}>e_{1}$, respectively, so that $E_{1}=e_{1}+\epsilon_{1}$ and
$E_{2}=e_{2}+\epsilon_{2}$. We also take $p_{1}=1/2-\delta$ and $p_{2}%
=1/2+\delta$, with $\delta$ a small quantity. We keep only leading order
terms. For $\epsilon_{1}>0$ and $\epsilon_{2}>0$, we are considering
$\mathfrak{M}^{\text{s}}$; see the blue solid curve. Along the blue arrow, we
have $dW\approx(\epsilon_{1}+\epsilon_{2})/2$, and $dQ\approx(e_{1}%
-e_{2})\delta$. From Irr-P, we have $(e_{2}-e_{1})\delta+(\epsilon
_{1}+\epsilon_{2})/2=0$, which is the condition for $dE=0$, as is easily seen
from equating $E_{\text{eq}}=(e_{1}+e_{2})/2$, and $E\approx E_{\text{eq}%
}+(e_{2}-e_{1})\delta+(\epsilon_{1}+\epsilon_{2})/2$, and $\delta<0,dS>0$,
making the system more uniform and $T>0$. For $\epsilon_{1}<0$ and
$\epsilon_{2}<0$, we are considering $\mathfrak{M}^{\text{u}}$; see the red
solid curve. Along the red arrow, we again have $dW>0$ but $dS<0$ as system
becomes more nonuniform. In both cases, $dQ>0$ so GSL remains intact, but the
behavior of $dS$ shows that $T<0$ as expected.

\textsc{Discussion: }We are ready to highlight our findings and discuss their
tantalizing consequences. Gen-Th\ for $\bar{\Sigma}$ always satisfies the
first law $dE=0$ \cite[(D)]{Suppl} (from which follows Irr-P in (S2)) for any
$\left\{  p_{k}\right\}  $ that determines $dW$.\ Irr-P \emph{intertwines}
$dW$ and $dQ$, but without imposing restrictions on $dp_{k}$
\cite{Note-stochasticity} so $dQ$ has either sign but always determined by
$dW$. It is here the relevance of \emph{microscopic} Mec-EQ-P for $\left\{
\mathfrak{m}_{k}\right\}  $, which has not been properly recognized so far,
becomes central in our reverse approach to finally identify Mec-EQ-P\ as the
\emph{root cause} of the stochastic principle of GSL/SL that is surprisingly
and completely determined by \emph{mechanical works }$\left\{  dW_{k}%
\geq0\right\}  $ and $dW$ alone\ in Gen-GSL-Th\ for $\bar{\Sigma}$. Thus,
\emph{GSL is a consequence of analytical mechanics}, the foundation of\emph{
}the BCM proposal, with $dQ\geq0$ directly related to $dS\gtreqqless0$
depending on $T\gtreqqless0$, respectively. This provides a direct proof of SL
for $\mathfrak{M}^{\text{s}}$ for $T\geq0\ $and $dS\geq0$, and extends it to
$\mathfrak{M}^{\text{u}}$ for $T<0\ $and $dS\leq0$, a very tantalizing
extension, which corrects a common \emph{misconception} of SL about negative
$T$ \cite{Ramsey}. As negative temperatures are not physically impossible
\cite{Landau,Ramsey,Purcell,Penrose}, it is quite surprising to realize that
$dS<0$\ is \emph{not} a violation of SL, if $T<0$. This has not been
recognized before; however, see \cite{Campisi}. Thus, Gen-GSL-Th supersedes
Cl-Th for $\Sigma_{\text{M}}$. In addition, Viol-Th and Viol-GSL-Th cannot be
taken as a viable possibility within the BCM proposal.

It is clear from the above theorem that $dS\gtreqqless0$ alone without any
reference to $T$ cannot be used to describe SL or its violation, a point that
has not been appreciated so far and highlights the role of $T$ for SL.

For a genuine GSL/SL violation in $\bar{\Sigma}$, we need nonspontaneous
evolution requiring $T\geq0$ and $dS<0$ for $\mathfrak{M}^{\text{s}}$, and
$T<0$ and $dS>0$ for $\mathfrak{M}^{\text{u}}$. During this evolution,
$\mathfrak{M}^{\text{s}}$ becomes\ less and less uniform ($dS<0$) and
$\mathfrak{M}^{\text{u}}$\ becomes\ more and more uniform ($dS>0$).\emph{
}Thus, $\mathfrak{M}^{\text{s}}$ runs off from the uniform sink $\mathfrak{M}%
_{\text{seq}}$ \emph{catastrophically} to asymptotically approach an extremely
\emph{nonuniform} macrostate $\mathfrak{M}_{\text{cats}}^{\text{s}}$ so that
$S_{\text{cats}}^{\text{s}}<<S_{\text{seq}}$ as if $\mathfrak{M}_{\text{seq}}%
$\ is \emph{unstable}. Similarly, $\mathfrak{M}^{\text{u}}$ runs towards and
terminates in the uniform source $\mathfrak{M}_{\text{ueq}}$\ along the
direction opposite to red arrows with $S$ increasing to $S_{\text{ueq}}$ as if
$\mathfrak{M}_{\text{ueq}}$\ is \emph{stable}. Both possibilities require some
external agent $\Sigma_{\text{ext}}$\ that, for example, inserts an internal
partition for $T>0$ ($T<0$) to obtain $dS<(>)0$ by "\emph{manipulating}
$\mathfrak{M}_{\text{seq}}$ ($\mathfrak{M}_{\text{ueq}}$) " to drive
$\mathfrak{M}^{\text{s}}$ ($\mathfrak{M}^{\text{u}}$)\ run away from (towards)
the sink $\mathfrak{M}_{\text{seq}}$ ($\mathfrak{M}_{\text{ueq}}$) in a
nonspontaneous manner; see \cite[(E)]{Suppl}, where various possibilities are
discussed. If $\bar{\Sigma}$\ is now detached from $\Sigma_{\text{ext}}$, the
result will be an internal constraint discussed by Callen \cite{Callen},
removal of which will initiate spontaneous processes to increase the entropy
in accordance with GSL/SL as it must. A similar situation occurs in the demon
paradox \cite{Knott,Maxwell,Smoluchowski,Gujrati-Maxwell}, in which the demon
starts manipulating $\mathfrak{M}_{\text{seq}}$ in a nonspontaneous manner as
discussed recently \cite{Gujrati-Szilard,Gujrati-Maxwell}.

Th-EQ-P for $\mathfrak{M}^{\text{s}}$ becomes a consequence of SL in Cl-Th for
$\Sigma_{\text{M}}$. Instead, we prove not only Th-EQ-P for $\mathfrak{M}%
^{\text{s}}$ and $\mathfrak{M}^{\text{u}}$ but also GSL for $\bar{\Sigma}$ by
using (S4), \textit{i.e.}, Mec-EQ-P, of analytic mechanics (instead of SL) in
Gen-GSL-Th. This shows the strength of our reverse approach. As $\mathfrak{m}%
_{k}^{\text{s}}$\ and $\mathfrak{M}^{\text{s}}$ asymptotically converge to
$\mathfrak{m}_{k\text{seq}}$\ and $\mathfrak{M}_{\text{seq}}$, respectively,
by removing nonuniformity gradually, $S(t)$ continuously increases and reaches
its \emph{maximum}. As $\mathfrak{m}_{k}^{\text{u}}$\ and $\mathfrak{M}%
^{\text{u}}$ runs away from $\mathfrak{m}_{k\text{ueq}}$\ and $\mathfrak{M}%
_{\text{ueq}}$, respectively, by increasing nonuniformity gradually, $S(t)$
continuously decreases and reaches some \emph{minimum} $S_{\text{cats}}$, even
though GSL remains valid so this case corresponds to $T<0$. With (S4) in
Gen-GSL-Th, BCM proposal imposes a very strong constraint of
\emph{nonnegativity} on $dW_{k}$ for $\bar{\Sigma}$. The nonnegative average
$dW$ through Irr-P determines $dQ$, which imposes a restriction on $\left\{
dp_{k}\right\}  $ to have a particular form $\left\{  dp_{k}^{\text{GSL}%
}\right\}  $ to ensure $dQ\geq0$. Another important finding is that Theorem
\ref{Theorem} also becomes the \emph{No-Go }theorem so that that GSL-violation
requires \emph{rejecting} (S4), \textit{i.e.,} Mec-EQ-P and, therefore, the
BCM proposal, a prospect that has not been recognized so far. Indeed, the
violation cannot be taken as a viable possibility as it defies analytic
mechanics. The conclusion provides not only a tantalizing insight into
SL-violation in $\Sigma_{\text{M}}$ for the first time by recognizing that
Viol-Th requires rejecting analytical mechanics, but is also a testament to
the robustness of the BCM\ proposal. This elevates Mec-EQ-P to be of
\emph{primary} relevance for thermodynamic foundation of GSL/SL in Gen-GSL-Th,
and clarifies the significance of "asymptotic approach to stable\ equilibrium"
for a thermodynamic system in $\mathfrak{M}^{\text{s}}$\ in time after being
momentarily disturbed \cite{Callen,Ruelle,KestinV2,Wood} from $\mathfrak{M}%
_{\text{seq}}$, which forms the cornerstone of Cl-Th. We also extend GSL/SL
now to cover $\mathfrak{M}^{\text{u}}$, where negative $T$ plays an important
role. Thus, SL must always be stated along with $T$, a fact that has not been appreciated.

Thermodynamic behavior of $E_{\text{w}}$ in Fig. \ref{Fig.1} can also be used
to obtain GSL/SL \cite[(D)]{Suppl}, but we have used $E_{k}$ to emphasize
their mechanical nature.

\pagebreak

\begin{center}
{\Large Supplementary Material}
\end{center}

We consider an isolated mechanical system $\Sigma$ containing $N=1,2,\cdots$
particles, which is then treated thermodynamically following the BCM\ proposal
introduced in the main text.\ Its deterministic Hamiltonian $\mathcal{H}%
(\left.  \mathbf{x}\right\vert \mathbf{w})$\ with \emph{no} explicit time
dependence is a function of $\mathbf{x\doteq}\left\{  \mathbf{(r}%
_{i},\mathbf{p}_{i})\right\}  ,i=1,2,\cdots,N$ denoting the set of positions
and momenta of $N$\ particles and specified by the fixed set $\mathbf{w}$ of
extensive observable work parameters (including $N$, volume $V$, etc. but
\emph{not} energy $E$) of $\Sigma$.\ Its microstates $\left\{  \mathfrak{m}%
_{k}(\mathbf{w})\right\}  $\ of energies $\left\{  E_{k}(\mathbf{w})\right\}
$\ are also time-independent and depend only on $\mathbf{w}$\ of the entire
$\Sigma$ and not of its various parts. We express this fact of
\emph{stationarity} by saying that $\mathcal{H}_{\text{eq}}\doteq
\mathcal{H}(\left.  \mathbf{x}\right\vert \mathbf{w}),\mathfrak{m}%
_{k\text{eq}}\doteq\mathfrak{m}_{k}(\mathbf{w})$ and $E_{k\text{eq}}\doteq
E_{k}(\mathbf{w})$\ are \emph{uniform} in $\mathbf{w}$, which is in the spirit
of and consistent with the well-known result of equilibrium (EQ)
thermodynamics \cite{LandauS} that an EQ macrostate $\mathfrak{M}_{\text{eq}%
}\ $is uniform in the set $\mathbf{X}\doteq(E,\mathbf{w})$ of its state
variables that defines the EQ state space $\mathfrak{S}_{\mathbf{X}}$. As a
consequence of uniformity, $\mathfrak{M}_{\text{eq}}\ $and $\mathfrak{m}%
_{k\text{eq}}$\ can neither perform any mechanical work nor generate any
mechanical power. Various \emph{system-intrinsic} (SI) EQ functions including
entropy $S_{\text{eq}}=S(\mathbf{X})$ of $\mathfrak{M}_{\text{eq}}$\ remain
time independent. (From now on, we will mostly suppress $\mathbf{w}$ for
$\Sigma$\ unless necessary as it is fixed.) In a nonequilibrium (NEQ)
macrostate $\mathfrak{M}$ of $\Sigma$ that we are interested in, all
SI-quantities such as $\mathcal{H}(\left.  \mathbf{x}\right\vert t),\left\{
\mathfrak{m}_{k}(t)\right\}  ,\left\{  E_{k}(t)\right\}  $ and $S(t)$ in
$\mathfrak{S}_{\mathbf{X}}$ must change in time $t$ so they acquire an
\emph{explicit} time dependence; see, for example, Eq. (\ref{Micropower-work})
that captures this observation for microwork $dW_{k}(t)$ and micropower
$\mathsf{P}_{k}(t)$ by microstate $\mathfrak{m}_{k}(t)$ \cite{micro-macro}.
This time-dependence is a result of nonuniformity among various disjoined but
mutually interacting uniform subsystems $\left\{  \Sigma_{l}\right\}  $ with
Hamiltonians $\left\{  \mathcal{H}_{l\text{eq}}\doteq\mathcal{H}_{l}(\left.
\mathbf{x}_{l}\right\vert \mathbf{w}_{l})\right\}  $ in terms of work
variables $\mathbf{w}_{l},l=1,2,\cdots$; nonuniformity results in internal
flows among $\left\{  \Sigma_{l}\right\}  $ that affect $\mathbf{X}_{l}%
\doteq(E_{l},\mathbf{w}_{l})$ so a NEQ $\mathfrak{M}(t)$ and $\mathfrak{m}%
_{k}(t)$\ are individually \emph{nonuniform} in $\left\{  \mathbf{X}%
_{l}\right\}  $ and $\left\{  \mathbf{w}_{l}\right\}  $, respectively. For the
isolated system, all SI-changes in it such as $dW_{k}(t)$ and $\mathsf{P}%
_{k}(t)$ are generated internally and are commonly identified as
\emph{spontaneous}. They are governed by the second law (SL) in classical
thermodynamics (Cl-Th) \cite{CallenS}. A \emph{nonspontaneous} process, which
is never under the purview of SL, cannot occur in $\Sigma$. For it to occur,
intervention from an outside agent $\Sigma_{\text{ext}}$ that may also include
Maxwell's demon \cite{MaxwellS} is required to disturb $\Sigma$. The
intervention only creates an internal constraint as discussed by Callen
\cite{CallenS}, whose removal, however, is controlled by SL. \ 

In (B) below, we show that nonuniformity in $\mathfrak{m}_{k}(t)$\ can be
described \emph{precisely} in terms of a set $\boldsymbol{\xi}$ of
\emph{internal variables}
\cite{deGrootS,MauginS,PrigogineS,GujratiS-EntropyReview} if we restrict the
size of subsystems to satisfy \emph{quasi-additivity} of their energies. This
requires imposing size restriction on $\Sigma$, which is denoted by
$\Sigma_{\text{E}}$. We introduce $\mathbf{W}\doteq\left(  \mathbf{w}%
,\boldsymbol{\xi}\right)  $ $\ $as the work variable and $\mathbf{Z}%
\doteq(E,\mathbf{W})$\ as the state variable that determines a NEQ state space
$\mathfrak{S}_{\mathbf{Z}}$ for $\Sigma_{\text{E}}$. The explicit time
dependence of $\mathfrak{M}(t)$ in $\mathfrak{S}_{\mathbf{X}}$ is fully
equivalent to the implicit dependence only through $\boldsymbol{\xi}(t)$ in
$\mathfrak{S}_{\mathbf{Z}}$; see Eq. (\ref{micropower}).

For NEQ statistical mechanics, we follow the BCM proposal and consider an
ensemble in which $\mathfrak{m}_{k}$ appears with probability $p_{k}$, see (D)
below; they together define various macrostates $\mathfrak{M}:\left\{
\mathfrak{m}_{k},p_{k}\right\}  $ for the same set $\left\{  \mathfrak{m}%
_{k}\right\}  $\ but different sets $\left\{  p_{k}\right\}  $, which define
the entropy $S$ by its Gibbs-Shannon formulation \cite{GibbsS} for $\Sigma
$\ as the ensemble average of microentropy $S_{k}\doteq-\eta_{k}$,
\begin{equation}
S=\left\langle S\right\rangle \doteq%
{\textstyle\sum\nolimits_{k}}
p_{k}S_{k},\label{GibbsEntropy}%
\end{equation}
which is easy to verify, see for example \cite{GujratiS-Hierarchy}, for any
arbitrary collection $\left\{  p_{k}\right\}  $. The validity of the above
$S$\ is based on the use of Stirling's approximation. We will not be concerned
about the error in the approximation and accept the formula to be the
definition of $S$. To make the subsystems' entropies additive, we require $S$
to be a \emph{state function} $S(\mathbf{Z})$ in $\mathfrak{S}_{\mathbf{Z}}$,
which requires a further restriction on the size as macroscopic (M) to capture
\emph{quasi-independence} of subsystems. We denote this macroscopic system by
$\Sigma_{\text{M}}$ to be uniquely specified in $\mathfrak{S}_{\mathbf{Z}}$;
in contrast, $\Sigma$\ and $\Sigma_{\text{M}}$ are not uniquely specified in
$\mathfrak{S}_{\mathbf{X}}$ and $\mathfrak{S}_{\mathbf{Z}}$, respectively. We
use $\bar{\Sigma}$\ to denote any of the three systems and any possible
subsystems when we are not interested in specifying the state space.\ 

\begin{center}
.$%
\begin{tabular}
[c]{l}%
Table of $\text{Various Forms of Thermodynamics in the text}$\\
\
\begin{tabular}
[c]{||l||l||}\hline\hline
Gen-Th & (S1),\ (S2), (S3) for $\bar{\Sigma}$, $dW=dQ\gtreqless0$%
\\\hline\hline
Cl-Th & Gen-Th and SL\ axiom for $\Sigma_{\text{M}}$\\\hline\hline
Rest-Th & Gen-Th for $\Sigma_{\text{M}}$\ with state function $S$%
\\\hline\hline
Viol-Th & Gen-Th for $\bar{\Sigma}$\ with $dS<0$\\\hline\hline
Gen-GSL-Th & Gen-Th and (S4) for $\bar{\Sigma}$ with $dQ\geq0$\\\hline\hline
Viol-GSL-Th & Gen-Th and (S4) for $\bar{\Sigma}$ with $dQ<0$\\\hline\hline
\end{tabular}
$\ \ $%
\end{tabular}
$
\end{center}

In NEQ thermodynamics, $\mathfrak{S}_{\mathbf{X}}$ is the most convenient
state space as it requires no knowledge of any internal nonuniformity that is
hidden from the observer and may not always be known, but requires an explicit
time dependence such as in $E_{k}(t)$; see Fig. 1 in main text. Despite this,
the choice of $\mathfrak{S}_{\mathbf{Z}}$ helps to explain not only the origin
of explicit time dependence resulting from nonuniformity but also to justify
the nonstandard definition of $T$ in Eq. (\ref{Temp}). The first law and
macroheat $dQ$, see (D), for $\bar{\Sigma}$ play a central role in proving the
generalized second law (GSL); both can be directly introduced without
identifying $T$. But we need $T$ to relate GSL to $dS$ to also \emph{directly}
prove the second law (SL) $dS\geq0$, and to shed light on the possible
significance of $dS<0$ in its violation, which is one of our goals. In most
cases of interest shown by the solid blue curve $\mathfrak{M}^{\text{s}}$
\cite{LandauS}, $\bar{\Sigma}$ attempts to thermodynamically minimize
nonuniformity in $\left\{  \mathbf{X}_{l}\right\}  $ by reducing internal
flows and become uniform as $S$ continue to increase as explained below. As
$E$ is fixed, we introduce a NEQ temperature as the ratio $dQ/dS$, which we
also simply write as
\begin{equation}
T\doteq\left.  dQ/dS\right\vert _{E}\label{Temp}%
\end{equation}
in any infinitesimal internal process within $\bar{\Sigma}$. While this
definition is unconventional, it is indeed the conventional definition
$\left(  \partial E/\partial S\right)  _{\boldsymbol{\xi}}$ for $\Sigma
_{\text{M}}$\ as seen later from Eqs. (\ref{ArbitraryFirstLaw}) and
(\ref{dQ-dS}). Thus, we take $T$ above to represent the temperature for any
number of particles in $\bar{\Sigma}$.

(A), (B) and (C) below are based on step (S1), see main text, only and deal
with microstates $\left\{  \mathfrak{m}_{k}(t)\right\}  $ of the nonuniform
deterministic Hamiltonian $\mathcal{H}(\left.  \mathbf{x}\right\vert t)$. (A)
deals with it in $\mathfrak{S}_{\mathbf{X}}$\ to show the explicit time
dependence in $\left\{  E_{k}(t)\right\}  $, which is related to the implicit
time-dependence of $\boldsymbol{\xi}(t)$ in $\mathfrak{S}_{\mathbf{Z}}$
introduced in (B). (C) introduces the \emph{Mechanical Equilibrium Principle}
(Mec-EQ-P) of Energy \cite{ArnoldS}, which forms step (S4). This step is
required to \emph{complete} the BCM proposal by connecting it to analytical
mechanics. $\mathfrak{M}(t)$ and their thermodynamics in $\mathfrak{S}%
_{\mathbf{X}}$ and$\ \mathfrak{S}_{\mathbf{Z}}$ are considered next. (D)
requires step (S2) with any arbitrary $p_{k}$ to identify a general NEQ
macrostates $\mathfrak{M}$, and develop its first law for any $\bar{\Sigma}$.
This results in (S3) and the General thermodynamics (Gen-Th) but without
invoking Mec-EQ-P\ for any possible $\mathfrak{M}$. (E) puts macroscopic size
restriction and considers $\Sigma_{\text{M}}$ to ensure $S$ is additive and a
state function in $\mathfrak{S}_{\mathbf{Z}}$.

Various forms of thermodynamics that we consider are listed in the table and
in the main text.

\textsc{(A) Mechanical Systems : }We first consider the mechanical system
$\bar{\Sigma}$ in $\mathfrak{S}_{\mathbf{X}}$ that is specified by
$\mathcal{H}(\left.  \mathbf{x}_{k}\right\vert t)$ given in terms of
$\mathbf{x}_{k}\mathbf{\doteq}\left\{  \mathbf{(r}_{ki},\mathbf{p}%
_{ki})\right\}  $ and with explicit time dependence. Let $\mathfrak{m}%
_{k}(t)\doteq\mathbf{x}_{k}$ denote one of its microstates with microenergy
$E_{k}\doteq\mathcal{H}(\left.  \mathbf{x}_{k}\right\vert t)$ so that
\begin{equation}
dE_{k}=\frac{\partial E_{k}}{\partial\mathbf{x}_{k}}\cdot d\mathbf{x}%
_{k}\mathbf{+}\frac{\partial E_{k}}{\partial t}dt=\frac{\partial E_{k}%
}{\partial t}dt \label{MicroenergyChange0}%
\end{equation}
as the first term vanishes due to Hamilton's equations of motion. As this term
includes the effect of interparticle potentials, we are not interested in
those potentials so we will no longer exhibit $\mathbf{x}_{k}$, which is
explicitly identified by the suffix $k$. Instead, we are interested in
$dE_{k}$\ that is determined solely by varying the parameter $t$
\cite{LandauS-Mech}. We write $\mathcal{H}(\left.  \mathbf{x}_{k}\right\vert
t)$ as $E_{k}(t)$ or simply $E_{k}$; similarly, we simply write $\mathfrak{m}%
_{k}$\ for $\mathfrak{m}_{k}(t)$ unless clarity is needed. The SI-microwork
$dW_{k}(t)$ can be expressed in terms of instantaneous power $\mathsf{P}%
_{k}(t)$ in $\mathfrak{S}_{\mathbf{X}}$\ as
\begin{equation}
dW_{k}(t)=-dE_{k}(t)=\mathsf{P}_{k}(t)dt. \label{Micropower-work}%
\end{equation}

Similarly, the mechanical system $\Sigma_{\text{E}}$ in a microstate
$\mathfrak{m}_{k}(\boldsymbol{\xi})$\ in $\mathfrak{S}_{\mathbf{Z}}$\ is
specified by $\mathcal{H}(\left.  \mathbf{x}_{k}\right\vert \boldsymbol{\xi}%
)$, which now depends on the set $\boldsymbol{\xi}$ of work parameters of
$\Sigma_{\text{E}}$, see (B) below, that determines the microwork
$dW_{k}(\boldsymbol{\xi})$ as we now explain. We remark that $\boldsymbol{\xi
}$ is assumed to be the same for all microstates so it does not carry the
suffix $k$, while the microforce $\mathbf{F}_{k}$ in Eq. (\ref{GenMicroforce})
and $dW_{k}$\ do so the latter are \emph{fluctuating} (over $\mathfrak{m}_{k}%
$). We will not consider here the case when $\boldsymbol{\xi}$ is fluctuating
(over $\mathfrak{m}_{k}$), to which the present method is easily extended
\cite{GujratiS-LangevinEq,GujratiS-Foundations}. The significance of
$\boldsymbol{\xi}$ becomes clear when we consider the microenergy
$E_{k}(\boldsymbol{\xi})\equiv\mathcal{H}(\left.  \mathbf{x}_{k}\right\vert
\boldsymbol{\xi})$ of $\mathfrak{m}_{k}(\boldsymbol{\xi})$, for which%
\begin{equation}
dE_{k}=\frac{\partial E_{k}}{\partial\boldsymbol{\xi}}\cdot d\boldsymbol{\xi
}\mathbf{\ } \label{MicroenergyChange}%
\end{equation}
by using the same argument as used in Eq. (\ref{MicroenergyChange0}). As
above, we simply denote $E_{k}(\boldsymbol{\xi})$ by $E_{k}$, and
$\mathfrak{m}_{k}(\boldsymbol{\xi})$ by $\mathfrak{m}_{k}$\ unless clarity is
needed. The SI-microforce, also known as the generalized microforce, in
$\mathfrak{m}_{k}$ is given by
\begin{equation}
\ \mathbf{F}_{k}\doteq-\partial E_{k}/\partial\boldsymbol{\xi},
\label{GenMicroforce}%
\end{equation}
and the SI-microwork done by it is given by%
\begin{equation}
dW_{k}(\boldsymbol{\xi})=\mathbf{F}_{k}\cdot d\boldsymbol{\xi}=-dE_{k}%
\mathbf{,} \label{GenMicrowork}%
\end{equation}
which clarifies the significance of $\boldsymbol{\xi}$ as the \emph{work
parameter} for $\Sigma_{\text{E}}$. In NEQ thermodynamics, the macroscopic
analog (ensemble average $\left\langle \mathbf{\bullet}\right\rangle $, see
(D) below) $\mathbf{F}=\left\langle \mathbf{F}\right\rangle $ of
$\mathbf{F}_{k}$\ determines the \emph{affinity} \cite{deGrootS,PrigogineS}
associated with $\boldsymbol{\xi}$. For us, $\boldsymbol{\xi}$ describes
nonuniformity and changes spontaneously with time in accordance with the
\emph{principle of mechanical equilibrium}, called Mec-EQ-P; see (C) below,
the main text and \cite[p. 99]{ArnoldS}. Thus, $E_{k}(\boldsymbol{\xi})$
continues to change with $\boldsymbol{\xi}$ and results in the above
microwork. The corresponding instantaneous microstate power in $\mathfrak{S}%
_{\mathbf{Z}}$\ is
\begin{equation}
\mathsf{P}_{k}(t)\doteq\mathbf{F}_{k}\cdot\boldsymbol{\dot{\xi}},
\label{micropower}%
\end{equation}
which can be used in Eq. (\ref{Micropower-work}) to obtain $dW_{k}(t)$. We
thus see that implicit time-dependence in $\boldsymbol{\xi}$ for
$\Sigma_{\text{E}}$\ in $\mathfrak{S}_{\mathbf{Z}}$\ is equivalent to the
explicit time dependence in $\mathfrak{S}_{\mathbf{X}}$ without any
$\boldsymbol{\xi}$ for $\Sigma$. The variation of $E_{k}(\boldsymbol{\xi})$
for an isolated system in $\mathfrak{S}_{\mathbf{Z}}$ as shown in Fig. 1 in
the main text with $\boldsymbol{\xi}$ can also be treated as time variation of
$E_{k}(t)$ in $\mathfrak{S}_{\mathbf{X}}$. This is also consistent with the
way a time-dependent Hamiltonian is treated in mechanics \cite{LandauS-Mech}
by considering $t$ as a parameter in the Hamiltonian so we can use $t$ instead
of $\boldsymbol{\xi}$,\ in which case Eq. (\ref{GenMicrowork}) reduces to Eq.
(\ref{Micropower-work}) obtained in $\mathfrak{S}_{\mathbf{X}}$, where
$\Sigma$ has no restriction on its size. We will not be concerned with the
actual time dependence of $\boldsymbol{\xi}$ in $\mathfrak{S}_{\mathbf{Z}}$ in
this investigation; all we need to remember that it can be accounted for by an
explicit time dependence in $\mathsf{P}_{k}(t)$ and other quantities for
$\Sigma$ in $\mathfrak{S}_{\mathbf{X}}$.

According to the discussion above, we can consider either $\boldsymbol{\xi}$
explicitly (to be specified in (B) below) and use the state space
$\mathfrak{S}_{\mathbf{Z}}$, or consider $t$ instead in the state space
$\mathfrak{S}_{\mathbf{X}}$ without any specification of $\boldsymbol{\xi}$.
In the former case, microstates are specified in $\mathfrak{S}_{\mathbf{Z}}$,
which restricts $\Sigma$ to $\Sigma_{\text{E}}$. NEQ\ temperature in Eq.
(\ref{Temp}) is always defined.

\textsc{(B) Internal Variables for Mechanical microstates }$\left\{
\mathfrak{m}_{k}\right\}  $ \textsc{of} $\Sigma_{\text{E}}$\textsc{: }We now
turn to the definition of internal variable
\cite{deGrootS,MauginS,PrigogineS,GujratiS-EntropyReview}\ for a deterministic
nonuniform mechanical system $\Sigma_{\text{E}}$, but denote it by $\Sigma$ to
simplify notation unless clarity is needed. For simplicity, we consider it to
be formed by two different and disjoined but mutually interacting subsystems
$\Sigma_{1}$ and $\Sigma_{2}$ that are each uniform in $\mathfrak{S}%
_{\mathbf{X}}$. The internal variables are required to \emph{uniquely} specify
$\left\{  \mathfrak{m}_{k}\right\}  $ in $\mathfrak{S}_{\mathbf{Z}}$; they are
not specified uniquely in $\mathfrak{S}_{\mathbf{X}}$. We use suffixes for
$\Sigma_{1},\Sigma_{2}$ and no suffix for $\Sigma$. We consider $\Sigma
,\Sigma_{1}$, and $\Sigma_{2}$ in microstate $\mathfrak{m}_{k},\mathfrak{m}%
_{k_{1}}$, and $\mathfrak{m}_{k_{2}}$ of energy $E_{k},E_{k_{1}}$, and
$E_{k_{2}}$, respectively. We also introduce $n_{1}=N_{1}/N$ and $n_{2}%
=N_{2}/N$. If we \emph{neglect} the energy $\delta E_{k}$ due to the interface
between $\Sigma_{1}$ and $\Sigma_{2}$, the Hamiltonians of the three bodies
(we do not show them their arguments $\mathbf{x}_{k},\mathbf{x}_{k_{1}}$ and
$\mathbf{x}_{k_{2}}$, respectively, but use microstate suffixes) are related
by
\begin{subequations}
\begin{equation}
\mathcal{H}_{k}(\mathbf{w},\boldsymbol{\xi})\approxeq\mathcal{H}_{k_{1}%
}(\mathbf{w}_{1})+\mathcal{H}_{k_{2}}(\mathbf{w}_{2}),
\label{Hamiltonian-Isolated}%
\end{equation}
which ensures microenergy \emph{quasi-additivity}
\begin{equation}
E_{k}\approxeq E_{1k_{1}}+E_{2k_{2}}, \label{Energy-Isolated}%
\end{equation}
which can be justified only if we \emph{restrict} the minimum sizes of
$\Sigma_{1}$ and $\Sigma_{2}$ to be some $\lambda_{\text{E}}$ that itself is
determined by the range of inter-particle interactions. Under this conditions,
we denote $\Sigma,\Sigma_{1}$, and $\Sigma_{2}$ by $\Sigma_{\text{E}}%
,\Sigma_{1\text{E}}$, and $\Sigma_{2\text{E}}$, and we have $\mathfrak{m}%
_{k}\approxeq\mathfrak{m}_{k_{1}}\otimes\mathfrak{m}_{k_{2}}$. However, for
the sake of notational simplicity, we suppress E in the suffix for various
quantities except for bodies.

The discussion in (A) applies to any $\Sigma_{\text{E}}$ and $\Sigma
_{\text{M}}$. In general, $\Sigma_{\text{E}}$\ is much smaller in size than
$\Sigma_{\text{M}}$. The additivity of extensive (which $t$ is not)\ work
parameters $\mathbf{w}_{1}$ and $\mathbf{w}_{2}$\ in Eq. (\ref{W-Sum}) is not
affected by the presence or absence of $\delta E_{k}$ so that we can introduce
the extensive work parameter set
\begin{equation}
\mathbf{w}\doteq\mathbf{w}_{1}+\mathbf{w}_{2}. \label{W-Sum}%
\end{equation}
We introduce a new extensive \emph{internal variable} set $\boldsymbol{\xi
}_{k}\doteq(\xi_{k\text{E}},\boldsymbol{\xi})$, where%
\end{subequations}
\begin{equation}
\xi_{k\text{E}}\doteq E_{1k_{1}}/n_{1}-E_{2k_{2}}/n_{2},\boldsymbol{\xi}%
\doteq\mathbf{w}_{1}/n_{1}-\mathbf{w}_{2}/n_{2},
\label{IndependentCombinations0}%
\end{equation}
and the extensive work parameter%
\begin{equation}
\mathbf{W}\doteq(\mathbf{w,}\boldsymbol{\xi}\mathbf{)}
\label{Isolated-WorkParameter}%
\end{equation}
are independent variables for $\Sigma_{\text{E}}$ in that $\left(  E_{k}%
,\xi_{k\text{E}}\right)  $ and $\left(  w_{l}\in\mathbf{w},\xi_{l}%
\in\boldsymbol{\xi}\right)  $ form pairs of independent variables. The
introduction of $\boldsymbol{\xi}_{k}$ is to ensure that the number of
variables on both sides in Eq. (\ref{Hamiltonian-Isolated}) are equal as they
must be for an equality. We also have%
\begin{equation}
E_{1k_{1},2k_{2}}=n_{1,2}(E_{k}\pm n_{2,1}\xi_{k\text{E}}),\mathbf{w}%
_{1,2}=n_{1,2}(\mathbf{w}\pm n_{2,1}\boldsymbol{\xi}). \label{Combinations}%
\end{equation}
\qquad We can easily extend the above discussion to any numbers $m$ of
subsystems $\left\{  \Sigma_{l}\right\}  ,l=1,2,\cdots,m$ forming
$\Sigma_{\text{E}}$, each specified by its own observable set $\mathbf{w}_{l}$
to allow for a complex form of nonuniformity in terms of uniform subsystems;
Eq. (\ref{Isolated-WorkParameter}) remains valid for any $m$. It is easy to
verify that all internal variables in $\boldsymbol{\xi}$, whose number we
denote by $\iota$, can be expressed as a \emph{linear combinations} of
$\left\{  \mathbf{w}_{l}\right\}  $
\cite{GujratiS-Foundations,GujratiS-EntropyReview}, with $m$ and $\iota$
increasing with the degree of nonuniformity of $\mathfrak{m}_{k}%
(\boldsymbol{\xi})$. Even though the definition of $\boldsymbol{\xi}_{k}$\ is
not unique, we choose it for convenience so that it vanishes when
$\Sigma_{\text{E}}$\ is uniform \textit{i.e.}, in EQ as defined in Eq.
(\ref{IndependentCombinations0}). The set $\left(  E_{k},\boldsymbol{\xi}%
_{k}\right)  $ forms the complete set of variables to uniquely specify
$\mathfrak{m}_{k}$ of $\Sigma_{\text{E}}$ in $\mathfrak{S}_{\mathbf{Z}}$. It
is easily verified that
\begin{equation}
F_{k\text{E}}\doteq-\partial E_{k}/\partial\xi_{k\text{E}}=0, \label{Work-kE}%
\end{equation}
a consequence of independent $E_{k}$ and $\xi_{k\text{E}}$ so the variation of
$\xi_{k\text{E}}$ does not generate any SI-microwork for any $\left\{
\mathbf{w}_{m}\right\}  $. This explains why it is not included in
$\mathbf{W}$, in which $\mathbf{w}$ remains constant as $\Sigma$\ evolves in
time; however, $\boldsymbol{\xi}_{k}$ continues to change due to internal flows.

Here, we focus on two ($m=2$) subsystems $\Sigma_{1\text{E}}$\ and
$\Sigma_{2\text{E}}$ of $\Sigma_{\text{E}}$, for simplicity, each of which is
uniform (no internal variables for them) but $\Sigma_{\text{E}}$ is not. We
let $E_{k_{l}},\mathbf{w}_{l}\doteq(N_{l},V_{l})$ specify uniform
microstates$\ \mathfrak{m}_{k_{l}},$ $l=1,2$ in $\mathfrak{S}_{\mathbf{X}}$.
We now focus on the nonuniform microstate $\mathfrak{m}_{k}$ of $\Sigma
_{\text{E}}$, which require internal variables along with $E_{k}$\ and
$\mathbf{w}\doteq(N,V)$\ to be specified in $\mathfrak{S}_{\mathbf{Z}}$. We
keep observables $E_{k}$\ and $\mathbf{w}\doteq(N,V)$\ of $\Sigma_{\text{E}}%
$\ fixed along with $N_{1}$ and $N_{2}$ of $\Sigma_{1\text{E}}$\ and
$\Sigma_{2\text{E}}$. Following Eq. (\ref{IndependentCombinations0}), we have
\begin{equation}
\xi_{k\text{E}}\doteq E_{k_{1}}/n_{1}-E_{k_{2}}/n_{2},\xi_{\text{V}}\doteq
V_{1}/n_{1}-V_{2}/n_{2}, \label{IndependentCombinations2}%
\end{equation}
to identify $\mathbf{Z}=(E_{k},N,V,\xi_{k\text{E}},\xi_{\text{V}})$ for
$\Sigma_{\text{E}}$.

Densities in $\Sigma_{1\text{E}}$ and $\Sigma_{2\text{E}}$ are equal in a
\emph{uniform }microstate $\mathfrak{m}_{k}(\boldsymbol{\xi}_{k})$ of
$\Sigma_{\text{E}}$ so $\boldsymbol{\xi}_{k}=0$ and need not be considered.
This is consistent with the fact that uniform $\mathfrak{m}_{k\text{eq}}$ is
uniquely specified in $\mathfrak{S}_{\mathbf{X}}$. In this case, we have a
trivial additivity of the Hamiltonians in Eq. (\ref{Hamiltonian-Isolated})
given by
\begin{equation}
\mathcal{H}_{k}(\mathbf{w})\simeq\mathcal{H}_{k_{1}}(\mathbf{w}_{1}%
)+\mathcal{H}_{k_{2}}(\mathbf{w}_{2}) \label{Hamiltonian-Isolated-Uniform}%
\end{equation}
with no internal flows between subsystems. In a nonuniform microstate,
$\boldsymbol{\xi}_{k}\neq0$. We recall that $\mathbf{w}$ is \emph{fixed} for
$\mathfrak{m}_{k}(\boldsymbol{\xi}_{k})$ but $\mathbf{w}_{1}$ and
$\mathbf{w}_{2}$ can change due to possible transfers (internal flows) between
$\Sigma_{1\text{E}}$ and $\Sigma_{2\text{E}}$ with $d\mathbf{w}_{1}%
=-d\mathbf{w}_{2}$. It follows from Eq. (\ref{Combinations}) that%
\begin{equation}
\dot{E}_{1k_{1},2k_{2}}=n_{1,2}(\dot{E}_{k}\pm n_{2,1}\dot{\xi}_{k\text{E}%
}),\mathbf{\dot{w}}_{1}=-\mathbf{\dot{w}}_{2}=n_{1}n_{2}\boldsymbol{\dot{\xi}}
\label{CombinatioRate}%
\end{equation}
for $\mathfrak{m}_{k}(\boldsymbol{\xi}_{k})$.

As $\mathbf{W}$, \textit{i.e.}, $\boldsymbol{\xi}$\ in Eq.
(\ref{Isolated-WorkParameter}) for $\Sigma_{\text{E}}$ is the work variable in
$\mathcal{H}_{k}(\boldsymbol{\xi}_{k})$ (we suppress $\mathbf{w}$ as it it
constant), $E_{k}(\boldsymbol{\xi}_{k})$ corresponding to $\mathfrak{m}%
_{k}(\boldsymbol{\xi}_{k})$ is only a function of $\boldsymbol{\xi}$\ due to
Eq. (\ref{Work-kE}). This is shown in Fig. 1 in main text. As $\boldsymbol{\xi
}_{k}$\ continuously changes due to transfers between $\Sigma_{1\text{E}}$ and
$\Sigma_{2\text{E}}$, this causes variations in $E_{k}(\boldsymbol{\xi})$. The
variation is similar to the variation in $E_{k}(t)$ for a $\Sigma$ in
$\mathfrak{S}_{\mathbf{X}}$ as discussed in (A).

In summary, as long as an internal variable is used, the system is restricted
to be at least $\Sigma_{\text{E}}$ in size and requires $\mathfrak{S}%
_{\mathbf{Z}}$ for unique specification. There is no explicit time dependence
in this state space. If considerations of subsystems are not important in any
discussion then quasi-additivity is not an issues as is the case for the the
discussion of Eq. (\ref{Micropower-work}). In that case, $\Sigma$\ can be
considered as a whole in $\mathfrak{S}_{\mathbf{X}}$; there is no need to
consider $\delta E_{k}$ separately, which is included in $\mathcal{H}%
_{k}(\mathbf{w},t)$. Thus, the mechanical situation here is that of (A).

\textsc{(C) Mechanical Equilibrium Principle of Energy} \textsc{(Mec-EQ-P) for
$\mathfrak{m}_{k}$: }Time-independent microstates $\left\{  \mathfrak{m}%
_{k\text{eq}}\right\}  $ of a Hamiltonian system, being uniform as
demonstrated by the Uniformity Theorem\emph{ }below, play an important role in
our approach as we explain now. Any nonuniformity in $\bar{\Sigma}$ endows
$\left\{  \mathfrak{m}_{k}\right\}  $ with explicit time dependence as
$\left\{  \mathfrak{m}_{k}(t)\right\}  $ for $\Sigma$ or implicit time
dependence through $\boldsymbol{\xi}_{k}$\ as $\mathfrak{m}_{k}%
(\boldsymbol{\xi}_{k})$ for $\Sigma_{\text{E}}$ or $\Sigma_{\text{M}}$ as
shown by $\left\{  \mathfrak{m}_{k}^{\text{s}}\right\}  $\ and $\left\{
\mathfrak{m}_{k}^{\text{u}}\right\}  $; see dashed-dotted blue and red curves,
respectively; the directions of blue and red arrows on them denote increasing
$t$ during their temporal SI-evolution controlled by internally generated
processes that are mechanically spontaneous and generate micropower
$\mathsf{P}_{k}(t)$. In analytical mechanics, there is no dissipation in
evolution but $\mathfrak{m}_{k}^{\text{s}}$ is special in that its evolution
will normally undergo oscillations about $\mathfrak{m}_{k\text{seq}}$ that
will persist forever as manifested by the dashed-dot blue curve in . 1 in main
text; in contrast, the evolution of $\mathfrak{m}_{k}^{\text{u}}$ has no
oscillation, see the dashed-dot red curve there, as $\mathfrak{m}%
_{k}^{\text{u}}$\ runs away from $\mathfrak{m}_{k\text{ueq}}$ but terminates
is a \emph{catastrophe} in which $\mathfrak{m}_{k}^{\text{u}}$ becomes
extremely nonuniform, which for $\Sigma_{\text{E}}$ or $\Sigma_{\text{M}}$
corresponds to an extremely large $\iota=\iota_{\text{cats}}$; see (E) below.
Thus, $\mathfrak{m}_{k\text{ueq}}$ is the \emph{source} for the SI-evolution
of $\mathfrak{m}_{k}^{\text{u}}$.

As we are eventually interested in $\mathfrak{m}_{k}^{\text{s}}$\ and
$\mathfrak{m}_{k}^{\text{u}}$ in a thermodynamic setting, there will be
macroheat to be be simultaneously considered that we discuss in (D) below. As
is well known, its presence gives rise to dissipation so all stable
thermodynamic processes in $\mathfrak{M}^{\text{s}}$ terminate asymptotically
at $\mathfrak{M}_{\text{seq}}$\ as shown by the solid blue curve in the
figure. Keeping this in hindsight, we intentionally overlook oscillations in
$\mathfrak{m}_{k}^{\text{s}}$ as they do not affect the change in
$E_{k}^{\text{s}}$ during its SI-evolution to $\mathfrak{m}_{k\text{seq}}$
only, making the latter as the \emph{sink} for the SI-evolution of
$\mathfrak{m}_{k}^{\text{s}}$. This is a useful strategy, since oscillations
become thermodynamically irrelevant as $\mathfrak{M}^{\text{s}}$
asymptotically approaches $\mathfrak{M}_{\text{seq}}$; see below. This also
clarifies the importance of the uniform microstate $\mathfrak{m}_{k\text{eq}}%
$; see the UniformityTheorem below.

We first treat $\Sigma_{\text{E}}$, whose results pave the way for a clear
understanding of what to expect for $\Sigma$. In analytical mechanics, the
temporal evolution of $\boldsymbol{\xi}_{k}$ and $\mathfrak{m}_{k}%
(\boldsymbol{\xi}_{k})$\ in $\mathfrak{S}_{\mathbf{Z}}$\ is controlled by
$\mathfrak{m}_{k\text{eq}}$, and is governed by the emergent SI-microforce
$\mathbf{F}_{k}$ in Eq. (\ref{GenMicroforce}) by controlling internal flows
within $\mathfrak{m}_{k}(\boldsymbol{\xi}_{k})$. The emergent processes in
$\Sigma_{\text{E}}$ resulting in the SI-microwork in Eq. (\ref{GenMicrowork})
are commonly identified as \emph{spontaneous} since they are internally
controlled by SI-microforce $\mathbf{F}_{k}$\ and not from the outside by any
nonsystem microforce. The mechanical equilibrium (Mec-EQ) point at
$E_{k}^{\text{eq}}$ is the \emph{equilibrium point} \cite{ArnoldS} in the
SI-evolution under the SI-micorforce $\mathbf{F}_{k}$, where not only the
SI-internal velocity $\boldsymbol{\dot{\xi}}_{k}^{\text{eq}}$ but also the
SI-micorforce $\mathbf{F}_{k}^{\text{eq}}$ \emph{vanish}:
\begin{subequations}
\label{CriticalPoint}%
\begin{align}
\boldsymbol{\dot{\xi}}_{k}^{\text{eq}}  &  =0,\label{IsolatedEQ}\\
\mathbf{F}_{k}^{\text{eq}}  &  =-\partial E_{k}/\partial\boldsymbol{\xi}%
_{k}^{\text{eq}}=0. \label{Microforce-Isolated}%
\end{align}

It follows from Eq. (\ref{Microforce-Isolated}) that this point represents an
extremum in $E_{k}$; see Fig. 1 in main text. Its \emph{minimum} at
$E_{k}^{\text{seq}}=E_{k}^{\text{eq}}$ representing a mechanically\emph{
stable EQ} point enunciates the \emph{mechanical asymptotic stability
principle of minimum energy}. In thermodynamics, stability refers to
asymptotic stability in that the system must eventually approach stable
equilibrium so that $\boldsymbol{\xi}_{k}(t)\rightarrow0$ as $t\rightarrow
\infty$ \cite{CallenS}. In this sense, we see that the absolute value
$\left\vert \boldsymbol{\xi}_{k}(t)\right\vert $ behaves similar to $1/t$ in
the figure. The asymptotic approach is a stronger requirement than just
imposing stability in which the system never strays far away from stable EQ
point. In contrast, the \emph{maximum of }$E_{k}$ at $E_{k}^{\text{ueq}}%
=E_{k}^{\text{eq}}$ representing a mechanically \emph{unstable EQ} point
enunciates the \emph{mechanical instability principle of maximum energy}. In
this case, $\left\vert \boldsymbol{\xi}_{k}(t)\right\vert $ behaves similar to
$t$ in the figure. Both kinds of points are determined by the curvature of
$E_{k}$ at $E_{k}^{\text{eq}}$. The two principles are collectively called
\emph{Mec-EQ-P} in this investigation.

We now prove the following important theorem emphasizing the physical
significance of the uniformity of the Mec-EQ point $\mathfrak{m}_{k\text{eq}}$
for thermodynamic stability, which must exist only in $\mathfrak{S}%
_{\mathbf{X}}$ at $\boldsymbol{\xi}_{k}=0$, where $\mathfrak{M}_{\text{eq}}$ exists.
\end{subequations}
\begin{theorem}
\emph{Uniformity Theorem of }$\mathfrak{m}_{k\text{eq}}$ $:$\emph{ }The Mec-EQ
point $\mathfrak{m}_{k\text{eq}}$ of $\mathfrak{m}_{k}$ is stationary and
uniform microstate with no internal flows between the microstates of its
uniform subsystems.
\end{theorem}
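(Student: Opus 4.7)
The plan is to exploit the two Mec-EQ conditions in Eq.~(\ref{CriticalPoint}) separately: the velocity condition (\ref{IsolatedEQ}) will give stationarity and the absence of internal flows, while the force condition (\ref{Microforce-Isolated}), together with the way the internal variables were constructed in part (B), will give uniformity.

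First I would take the condition $\boldsymbol{\dot\xi}_k^{\text{eq}}=0$ and feed it into Eq.~(\ref{CombinatioRate}). Because $\mathbf{\dot w}_1=-\mathbf{\dot w}_2=n_1n_2\boldsymbol{\dot\xi}$, a vanishing $\boldsymbol{\dot\xi}_k^{\text{eq}}$ immediately forces $\mathbf{\dot w}_1=\mathbf{\dot w}_2=\mathbf{0}$, so no work-parameter transfer takes place between $\Sigma_{1\text{E}}$ and $\Sigma_{2\text{E}}$. The $\dot\xi_{k\text{E}}=0$ half of the same condition, used in the corresponding energy relation in Eq.~(\ref{CombinatioRate}) together with the isolated-system constraint $\dot E_k=0$, gives $\dot E_{1k_1}=\dot E_{2k_2}=0$, so there is no energy transfer across the interface either. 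Combined with the fact that each uniform subsystem is itself stationary under its time-independent Hamiltonian $\mathcal{H}_{k_l}(\mathbf{w}_l)$, this establishes that $\mathfrak{m}_{k\text{eq}}$ is stationary and sustains no internal flows.

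Next I would use the force condition $\mathbf{F}_k^{\text{eq}}=-\partial E_k/\partial\boldsymbol{\xi}^{\text{eq}}=0$ to locate the Mec-EQ point on the $E_k(\boldsymbol{\xi})$ curves of Fig.~\ref{Fig.1}. By the construction recorded in Eq.~(\ref{IndependentCombinations0}), every component of $\boldsymbol{\xi}$ is a per-particle \emph{density difference} between the two subsystems, and by the convention fixed at the end of (B) it is chosen to vanish exactly when $\Sigma_{\text{E}}$ is uniform. The assumption that each $E_k$ possesses a single extremum, together with the symmetry of the density-difference variables under interchange of $\Sigma_{1\text{E}}\leftrightarrow\Sigma_{2\text{E}}$, pins that extremum at $\boldsymbol{\xi}=0$; hence $\mathbf{F}_k^{\text{eq}}=0$ is realized precisely at the uniform configuration. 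Inserting $\boldsymbol{\xi}=0$ into the quasi-additive decomposition Eq.~(\ref{Hamiltonian-Isolated}) and Eq.~(\ref{Combinations}) shows $E_{1k_1}/n_1=E_{2k_2}/n_2$ and $\mathbf{w}_1/n_1=\mathbf{w}_2/n_2$, i.e.\ equal densities, which is the uniformity claim.

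The main obstacle I anticipate is the step identifying the extremum of $E_k$ with $\boldsymbol{\xi}=0$, since \emph{a priori} $\mathbf{F}_k^{\text{eq}}=0$ only says we sit at some critical point of $E_k(\boldsymbol{\xi})$. Justifying that this critical point must coincide with the uniform configuration requires either the subsystem-interchange symmetry argument sketched above, or an appeal to the single-extremum assumption stated in the \textsc{Setup} paragraph together with the explicit choice of $\boldsymbol{\xi}$ as a signed density difference. Once this identification is secured, the remaining arguments are direct consequences of Eqs.~(\ref{IndependentCombinations0})--(\ref{CombinatioRate}) and the extension to $m>2$ subsystems follows by linearity, completing the proof.
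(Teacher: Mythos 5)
Your treatment of the no-flows part essentially reproduces the paper's computation with Eq.~(\ref{CombinatioRate}), but one justification there is wrong as stated: you invoke an ``isolated-system constraint $\dot E_k=0$,'' whereas for an isolated system only the ensemble average $E=\sum_k p_kE_k$ is fixed; the individual microenergies $E_k(t)$ do change during NEQ evolution (that is precisely the content of $dW_k=-dE_k\neq0$). The fact you need, $\dot E_k^{\text{eq}}=0$, holds only \emph{at} the Mec-EQ point and must be derived from $\boldsymbol{\dot\xi}^{\text{eq}}=0$ via Eq.~(\ref{MicroenergyChange}) (equivalently Eq.~(\ref{Hamiltonian-Isolated-Uniform})), which is how the paper gets it before feeding $\dot\xi_{k\text{E}}^{\text{eq}}=0$ back into Eq.~(\ref{CombinatioRate}) to conclude $\dot E_{1k_1,2k_2}^{\text{eq}}=0$.

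The uniformity half is where the genuine gap lies, and it is the one you yourself flag. Your route — force condition (\ref{Microforce-Isolated}) plus an interchange symmetry $\Sigma_{1\text{E}}\leftrightarrow\Sigma_{2\text{E}}$ to pin the extremum at $\boldsymbol{\xi}=0$ — does not go through: the paper's subsystems are explicitly \emph{different} (unequal $n_1,n_2$, different compositions), so $E_k(\boldsymbol{\xi})$ has no reason to be even in $\boldsymbol{\xi}$, and the single-extremum assumption fixes uniqueness but not location. Worse, your claimed conclusion $E_{1k_1}/n_1=E_{2k_2}/n_2$ requires $\xi_{k\text{E}}=0$, which the force condition can never deliver because $F_{k\text{E}}\doteq-\partial E_k/\partial\xi_{k\text{E}}=0$ identically by Eq.~(\ref{Work-kE}); the Mec-EQ microforce constrains only $\boldsymbol{\xi}$, never $\xi_{k\text{E}}$. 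The paper avoids both problems by arguing in the opposite direction: it first establishes \emph{stationarity} (asymptotic convergence of $\mathfrak{m}_k^{\text{s}}$ to $\mathfrak{m}_{k\text{seq}}$ in $\mathfrak{S}_{\mathbf{X}}$, instability of $\mathcal{H}(\mathbf{x})$ for $\mathfrak{m}_{k\text{ueq}}$, and cessation of all internal flows from $\boldsymbol{\dot\xi}^{\text{eq}}=0$, $\dot\xi_{k\text{E}}^{\text{eq}}=0$), and then invokes the identification, made at the outset of the Supplement, that a stationary microstate of the time-independent Hamiltonian $\mathcal{H}(\left.\mathbf{x}\right\vert\mathbf{w})$ depends only on the global $\mathbf{w}$ and is therefore uniform, with $\boldsymbol{\xi}_k=0$ holding by the very construction of Eq.~(\ref{IndependentCombinations0}) rather than being derived from the extremum condition. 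To repair your argument you would need to replace the symmetry step by this stationarity-implies-uniformity identification (or some equivalent input), since the critical-point condition alone cannot locate the uniform configuration.
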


\begin{proof}
We begin by considering a microstate $\mathfrak{m}_{k}(t)$ for $\Sigma$\ in
$\mathfrak{S}_{\mathbf{X}}$ to establish that $\mathfrak{m}_{k\text{eq}%
}:(\mathfrak{m}_{k\text{seq}},\mathfrak{m}_{k\text{ueq}})$ is stationary. The
asymptotic convergence of $\mathfrak{m}_{k}^{\text{s}}$ to $\mathfrak{m}%
_{k\text{seq}}$ as $t\rightarrow\infty$ means that $\mathfrak{m}_{k\text{seq}%
}$ must be \emph{stationary} so it must be a uniform microstate of the
time-independent Hamiltonian $\mathcal{H}(\mathbf{x})$. The instability of
$\mathfrak{m}_{k\text{ueq}}$\ is due to the instability of the $\mathcal{H}%
(\mathbf{x})$ in this case so it\ is also uniform.

As $\Sigma$ contains no information about its internal structure, it cannot be
used to understand internal flows for which we need to at least consider
$\Sigma_{\text{E}}$ to justify the remainder part of the theorem. From Eq.
(\ref{IsolatedEQ}), we obtain $\dot{\xi}_{k\text{E}}^{\text{eq}}=0$ and
$\boldsymbol{\dot{\xi}}^{\text{eq}}=0$. Using first $\boldsymbol{\dot{\xi}%
}^{\text{eq}}=0$ in Eq. (\ref{CombinatioRate}), we conclude that $\dot{E}%
_{k}^{\text{eq}}=0$, which follows directly from Eq.
(\ref{Hamiltonian-Isolated-Uniform}). Using then $\dot{\xi}_{k\text{E}%
}^{\text{eq}}=0$ in Eq. (\ref{CombinatioRate}) along with $\dot{E}%
_{k}^{\text{eq}}=0$, we conclude that $\dot{E}_{1k_{1},2k_{2}}^{\text{eq}}=0$.
Together, they show that \emph{all} flows cease at $E_{k}^{\text{eq}}$, which
makes $\mathfrak{m}_{k\text{eq}}$ \emph{stationary} and, therefore,
\emph{uniform} as above. Any nonzero $\boldsymbol{\xi}_{k}$\ is due to
nonuniformity, and affects $E_{k}(\boldsymbol{\xi})\equiv E_{k}%
(\boldsymbol{\xi}_{k})$ as shown in Fig.1 in main text. This completes the
proof of the theorem.
\end{proof}

It should be clear from the above mechanical proof of the theorem that any
time-dependent $\mathfrak{m}_{k}:(\mathfrak{m}_{k}^{\text{s}},\mathfrak{m}%
_{k}^{\text{u}})$ must be uniform and stationary only at the equilibrium point
$\mathfrak{m}_{k\text{eq}}:(\mathfrak{m}_{k\text{seq}},\mathfrak{m}%
_{k\text{ueq}})$ for $\Sigma$. Away from it, $\mathfrak{m}_{k}$ must be nonuniform.

\textsc{(D) The First Law and General Thermodynamics (Gen-Th) for }$\Sigma$
\textsc{: }So far, we have only considered the step (S1), see main text. We
now take the step (S2) and introduce stochasticity by considering an ensemble
of $\bar{\Sigma}$ in which $\mathfrak{m}_{k}$ appears with probability $p_{k}%
$, but without invoking Mec-EQ-P discussed in (C). It should be stressed that
the deterministic Hamiltonian $\mathcal{H}(\left.  \mathbf{x}\right\vert t)$
is oblivious to $p_{k}$ so $\mathfrak{m}_{k}(t)$ is also oblivious to $p_{k}$
in that all its microquantities such as $E_{k}(t)$ do not change with $p_{k}$.
Despite this, they together define various macrostates $\mathfrak{M}:\left\{
\mathfrak{m}_{k},p_{k}\right\}  $ for the same set $\left\{  \mathfrak{m}%
_{k}\right\}  $\ but different sets $\left\{  p_{k}\right\}  $. We put no
restrictions on possible sets $\left\{  p_{k}\right\}  $ so the resulting
macrostates may have nothing to do with what we encounter in classical
thermodynamics (Cl-Th), see the table above, and their thermodynamics, to be
called \emph{general thermodynamics }(Gen-Th), may not always satisfy the
second law $(dS\geq0)$ that is a fundamental axiom (or assumption) in Cl-Th
\cite{CallenS} as we will see below. By taking $\left\{  p_{k}\right\}  $
arbitrary will allow us to determine the root cause of the second law of Cl-Th
and what will cause its violation in the \emph{violation thermodynamics}
(Viol-Th), both of them are contained in Gen-Th.

The macroenergy $E$ of $\bar{\Sigma}$ in Gen-Th is given by the ensemble
average $\left\langle E\right\rangle $%
\[
E=\left\langle E\right\rangle \doteq%
{\textstyle\sum\nolimits_{k}}
p_{k}E_{k},
\]
which is valid for any $N\geq1$. It follows from this that%
\[
dE=%
{\textstyle\sum\nolimits_{k}}
E_{k}dp_{k}+%
{\textstyle\sum\nolimits_{k}}
p_{k}dE_{k}.
\]
From Eq. (\ref{GenMicroforce}) in (A), we observe that the second sum gives
the negative of the SI-macrowork $dW$%
\begin{subequations}
\begin{equation}
dW\doteq%
{\textstyle\sum\nolimits_{k}}
p_{k}dW_{k} \label{generalized work}%
\end{equation}
as the ensemble average of SI-microwork $dW_{k}$. We identify the first sum,
which is $dE+dW$, with SI-macroheat $dQ$ as the ensemble average of
SI-microheat $dQ_{k}$%
\begin{equation}
dQ\doteq%
{\textstyle\sum\nolimits_{k}}
p_{k}dQ_{k}\doteq%
{\textstyle\sum\nolimits_{k}}
p_{k}E_{k}d\eta_{k}, \label{generalized heat}%
\end{equation}
where $\eta_{k}\doteq\ln p_{k}$ is the Gibbs \emph{probability index}
\cite{GibbsS}. It should be stressed that the above identification of
$dQ$\ does not require any size restriction on $\Sigma$\ and any notion of
temperature or entropy; see below. We thus obtain the statement of the
\emph{first law}
\end{subequations}
\begin{equation}
dE=dQ-dW \label{ArbitraryFirstLaw}%
\end{equation}
for $\bar{\Sigma}$ in terms of $dQ$ and $dW$, both defined for any $N$. This
completes the step (S3). It should be clear that $dQ$ and $dW$ are the primary
concepts in the first law and Gen-Th, making them equivalent as both are
independent of SL ($dS\geq0$).

We observe that $dE=0$ in $\bar{\Sigma}$ so that%
\[
E=%
{\textstyle\sum\nolimits_{k}}
p_{k}(t)E_{k}(t)\equiv%
{\textstyle\sum\nolimits_{k}}
p_{k}^{\text{eq}}E_{k}^{\text{eq}}=E_{\text{eq}},
\]
where $p_{k}^{\text{eq}}$\ is the probability of $\mathfrak{m}_{k\text{eq}}$,
and $E_{\text{eq}}$ is the macroenergy of $\mathfrak{M}_{\text{eq}}$. We use
this fact in the first law in Eq. (\ref{ArbitraryFirstLaw}) to obtain a simple
but very remarkable and extremely profound result%
\begin{equation}
dQ=dW\gtreqqless0. \label{DifferentialHeat-Work}%
\end{equation}
which provides the \emph{mechanical formulation }by $dW$ of the stochasticity
inherent in the thermodynamic process through $dQ$. We conclude that the
stochasticity, \textit{i.e.}, the change $\left\{  dp_{k}\right\}  $ in $dQ$
is strongly constrained by the mechanical work $dW$ and its signature for any
$\mathcal{P}$. Because of this constraint, we call Eq.
(\ref{DifferentialHeat-Work}) the \emph{irreversibility principle} (Irr-P)
under steps (S1-S3), which is valid for an isolated $\Sigma$ of any size, and
is a direct consequence of the first law.

We show in Lemma 1 in the main text that the inequality $dW\geq0$ is satisfied
in any \emph{spontaneous} infinitesimal process controlled by Mec-EQ-P. As a
consequence, $dW<0$ must only happen in any \emph{nonspontaneous}
infinitesimal process that violates Mec-EQ-P.

This completes the discussion of Gen-Th and the first law in it.

As $p_{k}$ remains constant in $dW$, it represents an isentropic macroquantity
to justify it as a \emph{mechanical} quantity, the average of the change
$dW_{k}=-dE_{k}$; see Eq. (\ref{generalized work}). On the other hand, $p_{k}$
does not remain constant in $dQ$ so it justifies $dQ$ as a \emph{stochastic}
quantity undergoing entropy change $dS$ but the two are not simply related in
all cases as discussed below.

We are interested in the microwork done during the $\mathfrak{m}_{k\text{eq}}%
$-controlled evolution of $\mathfrak{m}_{k}(t)$ along a trajectory $\gamma
_{k}$ in a thermodynamic process $\mathcal{P}$ in $\mathfrak{S}_{\mathbf{X}}$.
The accumulated microwork along $\gamma_{k}$ in step (S1) follows from Eq.
(\ref{Micropower-work}), and is given by
\begin{equation}
\Delta W_{k}=-\Delta E_{k}\label{IsolatedMicrowork}%
\end{equation}
by the microenergy change $\Delta E_{k}$ along $\gamma_{k}$; recall Eq.
(\ref{Work-kE}). In step (S2), its ensemble average over $\mathcal{P}$ yields
the SI-macrowork%
\begin{subequations}
\begin{equation}
\Delta W=\int_{\mathcal{P}}dW\label{AccumulatedHeat-Work1}%
\end{equation}
in Gen-Th; it remains valid even if $p_{k}$ remains constant over $\gamma_{k}%
$. In the cumulative form of Eq. (\ref{DifferentialHeat-Work}), we have the
identity
\end{subequations}
\begin{subequations}
\begin{equation}
\Delta Q=\Delta W,\label{IntegralHeat-Work}%
\end{equation}
where $\Delta Q$ is given by
\begin{equation}
\Delta Q=%
{\textstyle\sum\nolimits_{k}}
\int_{\gamma_{k}}p_{k}E_{k}d\eta_{k},\label{AccumulatedHeat-Work2}%
\end{equation}
and exists if and only if $p_{k}$ does not remain constant during
$\mathcal{P}$. This is consistent with the conventional wisdom that the
concept of heat does not apply to mechanical bodies for which $\Delta
Q\equiv0$. While $\Delta W$ is determined by the instantaneous value of
$p_{k}$ along $\gamma_{k}$, it does not determine how it changes along it. The
latter is determined by $\gamma_{k}\in\mathcal{P}$ and determines $\Delta Q$.

The macroworks $dW$ and $\Delta W$ play an important role in understanding the
function $E_{\text{w}}$ shown in Fig. 1. We first introduce the cumulative
macrowork%
\end{subequations}
\begin{subequations}
\begin{equation}
W_{\text{eq}}^{\prime}\doteq\int_{\mathcal{P}_{\text{eq}}^{\prime}}%
dW\equiv<\Delta W>_{\mathcal{P}_{\text{eq}}^{\prime}}\label{DeltaW'}%
\end{equation}
along a process $\mathcal{P}_{\text{eq}}^{\prime}$\ starting from EQ point
$\mathfrak{M}_{\text{eq}}$\ to some point $\mathfrak{M}$\ on solid curves in
Fig. 1. We immediately see that $\mathcal{P}_{\text{eq}}^{\prime}$ is in the
direction opposite to the blue arrow for $\mathfrak{M}^{\text{s}}$ but in the
direction of the red arrow for $\mathfrak{M}^{\text{u}}$. We use it to
introduce a \emph{macrowork function}
\begin{equation}
E_{\text{w}}\doteq E_{\text{eq}}-W_{\text{eq}}^{\prime}\label{E_w}%
\end{equation}
of $t$ or $\boldsymbol{\xi}$. It is this average $E_{\text{w}}$\ that is shown
in Fig. 1, and differs from $E=E_{\text{eq}}$ by the cumulative macrowork
$W_{\text{eq}}^{\prime}$. We see that
\begin{equation}
dE_{\text{w}}=-dW,\label{dE_w}%
\end{equation}
a result that provides an equivalent justification of GSL, which is
thermodynamically more intuitive regarding the role of the extremum of
$E_{\text{w}}$ in the evolution of $\mathfrak{M}$. We have focused on
microstates to emphasize the role of mechanics for deriving GSL.

Different components of $\boldsymbol{\xi}_{k}$\ take different times when they
vanish. They are called relaxation times so the components can be ordered
according to them \cite{GujratiS-Hierarchy} as fast and slow. As
$\Sigma_{\text{E}}$ or $\Sigma_{\text{M}}$ evolves in time, they appear or
disappear at different times, making $\mathfrak{m}_{k}$ more or less uniform,
a point that is discussed in (C). For a single $\xi$ in the figure in the main
text, $\mathfrak{M}^{\text{s}}$\ ($\mathfrak{M}^{\text{u}}$) becomes more and
more (less and less) uniform as $t$ increases so that $dS>$ ($<$) $0$ during
any infinitesimal change $dt$.

\textsc{(E) Restriction Thermodynamics (Rest-Th): }We now put a
\emph{particular} form of restriction on $\left\{  p_{k}\right\}  $ for
$\mathfrak{M}$ in $\mathfrak{S}_{\mathbf{Z}}$ of $\Sigma_{\text{M}}$\ that
ensures that its $E$ becomes a \emph{state function} of its $S$ and
$\boldsymbol{\xi}_{k}$, which in turn means that $S$ is a state function of
$E$ and $\boldsymbol{\xi}_{k}$ in $\mathfrak{S}_{\mathbf{Z}}$, whereas it is
not a state function for $\Sigma_{\text{E}}$. Using $E$ as a state function
$E(S,\xi_{\text{E}},\xi_{\text{V}}),N$ and $V$ fixed, we obtain the Gibbs
fundamental relation%
\end{subequations}
\[
dE=TdS-dW,
\]
where $T\doteq\left(  \partial E/\partial S\right)  _{\boldsymbol{\xi}}$\ is
the thermodynamic temperature of $\Sigma_{\text{M}}$, $F_{\text{E}}%
\doteq-\left(  \partial E/\partial\xi_{\text{E}}\right)  _{S,\xi_{\text{V}}%
},F_{\text{V}}\doteq-\left(  \partial E/\partial\xi_{\text{V}}\right)
_{S,\xi_{\text{E}}}$\ are SI-macroforces, and $dW=F_{\text{E}}d\xi_{\text{E}%
}+F_{\text{V}}d\xi_{\text{V}}$. It is the first law now in terms of $dS$ and
$dW$ and yields the \emph{restriction thermodynamics} (Rest-Th), which is
nothing but Gen-Th for $\Sigma_{\text{M}}$. The first term is nothing but $dQ$
in Eq. (\ref{ArbitraryFirstLaw}), which immediately establishes
\begin{equation}
dQ=TdS,T\doteq\left(  \partial E/\partial S\right)  _{\boldsymbol{\xi}}.
\label{dQ-dS}%
\end{equation}
It should be noted that the relationship \cite{GujratiS-EntropyReview} is
simply a the mathematical consequence of the state function
$E(S,\boldsymbol{\xi}_{k})$. However, the functional dependence
$E(S,\boldsymbol{\xi}_{k})$ still does not enforce SL ($TdS\geq0$) as was also
the case for Gen-Th, so Rest-Th allows us to make direct connection with Cl-Th
and Viol-Th. the justification for keeping $T$ in the new formulation
$TdS\geq0$ will be justified in the main text within the context of
generalized SL\ (GSL). We thus note that $dQ$ and $dS$ have the same signature
for positive $T$, but opposite signature for negative $T$.

From Eq. (\ref{dQ-dS}) follows a simple but very remarkable and extremely
profound relation%
\begin{equation}
\left.  dQ\right\vert _{E}=TdS=\left.  dW\right\vert _{E}
\label{MechanicalEntropyFormulation}%
\end{equation}
for $\Sigma_{\text{M}}$\ in Gen-Th (we add an extra suffix $E$ as a reminder
that $E$ is fixed, and should not be confused the suffix in $\xi_{\text{E}}$)
so it remains valid both in Cl-Th and Viol-Th for $\Sigma_{\text{M}}$:%
\begin{equation}%
\begin{array}
[c]{c}%
TdS\geq0\text{ }\Longrightarrow\text{ }\left.  dW\right\vert _{E}\geq0,\\
TdS<0\text{ }\Longrightarrow\left.  dW\right\vert _{E}<0.
\end{array}
\label{EntropyGeneration-Destruction}%
\end{equation}
But what is most remarkable about Eq. (\ref{MechanicalEntropyFormulation}) is
that it provides a purely \emph{mechanical definition} of stochastic entropy
change by using the first law that holds in all kinds of thermodynamics
discussed here. In addition, we also see from this equation that
$T\doteq\left(  \partial E/\partial S\right)  _{\boldsymbol{\xi}}$ for
$\Sigma_{\text{M}}$\ here is no different than $T$ in Eq. (\ref{Temp}) for
$\Sigma$ of any size so we use the latter as the definition of $T$ for any
system of any size.

This completes the introduction of the Rest-Th (we remove the extra suffix $E$
in Eq. (\ref{EntropyGeneration-Destruction}) for simplicity), which is applied
below to a macrostate $\mathfrak{M}$ of $\Sigma_{\text{M}}$\ used for Eq.
(\ref{IndependentCombinations2}), in which $\xi_{k\text{E}}$ must be replaced
by its ensemble average $\xi_{\text{E}}\doteq E_{1}/n_{1}-E_{2}/n_{2}$. A
simple calculation using the state function $S(E,\boldsymbol{\xi}_{k})$ and
$dE=0$ yields%
\begin{equation}
dS=n_{1}n_{2}[(\beta_{1}-\beta_{2})d\xi_{\text{E}}+(\beta_{1}P_{1}-\beta
_{2}P_{2})d\xi_{\text{V}}], \label{EntropyGeneration}%
\end{equation}
where $\beta_{1},P_{1}$, and $\beta_{2},P_{2}$ are the inverse temperature and
pressure of $\Sigma_{1\text{M}}$\ and $\Sigma_{2\text{M}}$, respectively, and
$\beta\doteq1/T$ $=n_{1}\beta_{1}+n_{2}\beta_{2}$ is the inverse temperature
of $\Sigma$. The two terms on the right side in Eq. (\ref{EntropyGeneration})
represent entropic contributions due to the two internal variables
$\xi_{\text{E}}$ and $\xi_{\text{V}}$ in Rest-Th; each must be nonnegative for
SL or negative for its violation. In terms of $dE_{1}=-dE_{2}$ as the
macroenergy change and $dV_{1}=-dV_{2}$ as the volume change of $\Sigma
_{1\text{M}}$, we have
\[
d\xi_{\text{E}}=\frac{1}{n_{1}n_{2}}dE_{1},d\xi_{\text{V}}=\frac{1}{n_{1}%
n_{2}}dV_{1}.
\]
Using $\gamma\doteq-1/T$ introduced by Ramsey \cite{RamseyS}, whose numerical
values define the "hotness" of $\Sigma_{\text{M}}$ as it increases from
$-\infty$ to $+\infty$ covering positive and negative temperatures, Eq.
(\ref{EntropyGeneration}) becomes\qquad\qquad\qquad\qquad\qquad\qquad\qquad%
\begin{equation}
dQ=TdS=-(\Delta\gamma/\gamma)dE_{1}-(\Delta(\gamma P)/\gamma)dV_{1}
\label{EntropyGeneration0}%
\end{equation}
for $\Sigma_{\text{M}}$ in Rest-Th, where $\Delta\gamma\doteq(\gamma
_{2}-\gamma_{1})$, and $\Delta(\gamma P)\doteq(\gamma_{2}P_{2}-\gamma_{1}%
P_{1})$; $\Delta\gamma>0$ means that $\Sigma_{2\text{M}}$ is hotter than
$\Sigma_{1\text{M}}$, and vice-versa. For GSL to hold, we require $dS$ in
$dQ=TdS$\ to be $\geq0$ for $T>0$ $\left(  \gamma<0\right)  $ and $<0$ for
$T<0$ $\left(  \gamma>0\right)  $. For GSL violation, we require $dS<0$ for
$T>0$ and \ $dS>0$ for $T<0$. Both situations are considered in the main text.
Let us consider just the first term above to be specific by setting $dV_{1}%
=0$. We consider various scenarios; (a) and (b) refer to $\mathfrak{M}%
^{\text{s}}$ having $T>0$, and (c) and (d) refer to $\mathfrak{M}^{\text{u}}$
having $T<0$. The analysis here is more extensive compared to an earlier
preliminary and incomplete investigation \cite{GujratiS-Foundations}, where
the issue of catastrophic evaluation was first discussed.

\begin{enumerate}
\item[(a)] For $dQ=dW>0$ and $T>0$, we must have $dE_{1}>0$ for $\mathfrak{M}%
^{\text{s}}$ so that macroheat flows from hot to cold as expected in which
it\ converges to $\mathfrak{M}_{\text{seq}}$ due to an \emph{attractive}
SI-macroforce $\mathbf{F}^{\text{seq}}$ pointing towards SEQ. The SI-evolution
of $\mathfrak{M}^{\text{s}}$ is spontaneous due to $\mathbf{F}^{\text{seq}}$
pointing towards its sink $\mathfrak{M}_{\text{seq}}$. Therefore, as expected
in this case, $dS>0$ so Cl-Th and Gen--GSL-Th remain valid. This is the most
common situation.

\item[(b)] For $dQ=dW<0$ and $T>0$, $dE_{1}<0$ so that macroheat flows from
cold to hot, and $\mathfrak{M}^{\text{s}}$ runs away from $\mathfrak{M}%
_{\text{seq}}$ due to some \emph{repulsive} macroforce $\mathbf{F}%
_{\text{repu}}^{\text{s}}$, distinct from the SI-macroforce $\mathbf{F}%
^{\text{seq}}$,\ to eventually\ converge to $\mathfrak{M}_{\text{cata}%
}^{\text{s}}$ by becoming more and more nonuniform. The evolution of
$\mathfrak{M}^{\text{s}}$ is not spontaneous as is in (a) and $\mathfrak{M}%
_{\text{seq}}$ is no longer the sink. In this case, $dS<0$ (Viol-Th), but we
also violate GSL (Gen--GSL-Th). \ \ 

\item[(c)] For $dQ=dW>0$ but $T<0$ for $\mathfrak{M}^{\text{u}}$, $dE_{1}<0$
so that macroheat flows from cold to hot and $\Sigma_{\text{M}}$ becomes more
and more nonuniform because of the instability in it as discussed in the main
text. The spontaneous evolution of $\mathfrak{M}^{\text{u}}$\ from its source
$\mathfrak{M}_{\text{ueq}}$ under the \emph{repulsive} SI-macroforce
$\mathbf{F}^{\text{ueq}}$\ is catastrophic in that it converges to a
catastrophic macrostate $\mathfrak{M}_{\text{cata}}^{\text{u}}$. In this case,
$dS<0$ so SL is violated (Viol-Th) but GSL (Gen--GSL-Th) remains valid.

\item[(d)] For $dQ=dW<0$ and $T<0$, we must have $dE_{1}>0$ so that macroheat
flows from hot to cold in $\mathfrak{M}^{\text{u}}$. In this case,
$\mathfrak{M}^{\text{u}}$ nonspontaneously converges to $\mathfrak{M}%
_{\text{ueq}}$ due to an \emph{attractive} macroforce, which is distinct from
the repulsive SI-macroforce $\mathbf{F}^{\text{ueq}}$, with$\ dS>0$ so Cl-Th
remains valid\ but not Gen--GSL-Th.\ Thus, $\mathfrak{M}_{\text{ueq}}$ is no
longer the source for $\mathfrak{M}^{\text{u}}$-evolution.
\end{enumerate}

We analyze nonspontaneous processes further. We first consider (b). As
$\mathfrak{M}^{\text{s}}$ runs away from its sink $\mathfrak{M}_{\text{seq}}$
to a new macrostate $\mathfrak{M}^{\text{s}\prime}$, the latter further runs
away from $\mathfrak{M}^{\text{s}}$ by its $\xi_{\text{E}}$\ deviating further
from their values in $\mathfrak{M}^{\text{s}}$. Thus, we get successive
macrostates $\mathfrak{M}^{\text{s}(p)},p=0,1,2,\cdots$ which run away from
the sink farther and farther, during which $dS^{(p)}$ remains non-positive as
$p$ increases. Therefore, the evolution to $\mathfrak{M}_{\text{cata}%
}^{\text{s}}$ is \emph{catastrophic} in that it makes $\Sigma_{\text{M}}$
highly nonuniform due to unexplained nonsystem repelling macroforce
$\mathbf{F}_{\text{repu}}^{\text{s}}$ that mutilates $\mathfrak{M}%
_{\text{seq}}$. It follows from the stability of $\mathfrak{M}^{\text{s}}$
considered here that $\Sigma_{1\text{M}}$\ and $\Sigma_{2\text{M}}$ are also
stable so their specific heats at constant volume are nonnegative. As
$dE_{1}=-dE_{2}<0$, the disparity $\Delta\gamma$ continues to increase with
$\Sigma_{2\text{M}}$ getting more hot and $\Sigma_{1\text{M}}$ getting more
cold, until $\Delta\gamma\ $takes its maximum value $\Delta\gamma
_{\text{cata}}$ in $\mathfrak{M}_{\text{cata}}^{\text{s}}$. We now consider
(d), where a similar discussion can also carried out for $\mathfrak{M}%
^{\text{u}}$ but different conclusions. Here, $\mathfrak{M}^{\text{u}}$ gets
more uniform but the uniformity is not due to SI-macroforce so the evolution
is not governed by SL. On the other hand, the catastrophe in (c) occurs
spontaneously at negative $T$ so $dS<0$ should be treated as being governed by
SL. Thus, GSL seems to capture spontaneous processes at both positive and
negative temperatures. In this sense, GSL subsumes SL.

The discussion is easily extended to include the second term in Eq.
(\ref{EntropyGeneration0}) with same conclusions, which is consistent with
Lemma in the main text.

We clearly see from (a) and (d) that $dS>0$ is not always a consequence of
spontaneous processes. In (d), it is a consequence not only of negative $T$
but also of negative $dW$ performed by nonsystem forces that result in a
nonspontaneous process. This process is not controlled by SL so $dS>0$ has no
significance.\ Similarly, $dS<0$\ in (b) and (c) shows that it is not always a
consequence of nonspontaneous processes. In (b), it is a consequence only of
negative $dW$ performed by nonsystem forces that result in a nonspontaneous
process such as during the creation of internal constraints as explained by
Callen \cite{CallenS}. Again, this process is not controlled by SL, while
SI-macrowork $dW>0$\ or the removal of the internal constraint is controlled
by SL so $dS<0$ in (d) has no significance for SL-violation.\ From (a) and
(c), we observe that GSL is always a consequence of spontaneous processes, but
fails for nonspontaneous processes in (b) and (c).\

\end{document}